\newtheorem{theorem}{Theorem}
\newtheorem{lemma}{Lemma}
\theoremstyle{remark}
\newcommand{\lb}{\left(}
\newcommand{\rb}{\right)}
\newcommand{\lc}{\left\{}
\newcommand{\rc}{\right\}}
\newcommand{\abs}[1]{ \left| #1 \right|}
\begin{document}

\title{Millimeter-Wave Beam Search with Iterative Deactivation and Beam Shifting}
\author{Chunshan Liu, Min Li, Lou Zhao, Philip Whiting, Stephen~V.~Hanly, Iain~B.~Collings
\thanks{This work will be presented in part in Proceedings of the 54th IEEE International Conference on Communications, ICC'20, Dublin, Ireland, June 7-11, 2020~\cite{Liu2020ICC}. Chunshan Liu and Lou Zhao are with the School of Communication Engineering, Hangzhou Dianzi University, Hangzhou, 310018, China (email: \{chunshan.liu, lou.zhao\}@hdu.edu.cn). Min Li is with College of Information Science and Electronic Engineering, Zhejiang University, Hangzhou, 310027, China (e-mail: min.li@zju.edu.cn). Philip Whiting, Stephen V. Hanly and Iain B. Collings are with the School of Engineering, Macquarie University, Sydney, Australia (email: \{philip.whiting, stephen.hanly, iain.collings\}@mq.edu.au). (\textit{Corresponding author: Min Li})}}

\maketitle
\begin{abstract}
Millimeter Wave (mmWave) communications rely on highly directional beams to combat severe propagation loss. In this paper, an adaptive beam search algorithm based on spatial scanning, called Iterative Deactivation and Beam Shifting (IDBS), is proposed for mmWave beam alignment. IDBS does not require advance information such as the Signal-to-Noise Ratio (SNR) and channel statistics, and matches the training overhead to the unknown SNR to achieve satisfactory performance. The algorithm works by gradually deactivating beams using a Bayesian probability criterion based on a uniform improper prior, where beam deactivation can be implemented with low-complexity operations that require computing a low-degree polynomial or a search through a look-up table. Numerical results confirm that IDBS adapts to different propagation scenarios such as line-of-sight and non-line-of-sight and to different SNRs. It can achieve better tradeoffs between training overhead and beam alignment accuracy than existing non-adaptive algorithms that have fixed training overheads.
\end{abstract}
\begin{IEEEkeywords}
Beamforming, beam alignment, beam training, Bayesian, millimeter wave communications.
\end{IEEEkeywords}

\section{Introduction}
Millimetre wave (mmWave) wireless communications represent one of the most promising means of introducing new high bandwidth services as anticipated for 5G and beyond (such as high-definition video, V2X communications~\cite{pi2011introduction,andrews2014will,xiao2017millimeter,7914742,lee2018spectrum,niu2015survey,8651496}, mobile edge computing~\cite{8959381} and surveillance~\cite{8335327}). With strongly directional beams implemented by the antenna arrays equipped at the base station (BS) and the user equipment (UE) to compensate for the high propagation loss~\cite{Akdeniz2014}, high data rate mmWave communications can be achieved over distances of hundreds of metres. However, to align the directional beams to dominant propagation paths requires beam direction searches, which are challenging~\cite{8458146}  because they often have to take place over a large search space and within a very short time period. The former is due to the requirement of highly directional beams, hence requiring a large number of beams to cover all possible directions. The latter is due to the short coherence time of mmWave channels which can be on the order of fractions of a millisecond in mobile scenarios~\cite{va2016impact}.

Spatial scanning is a widely adopted approach for mmWave beam search
\cite{hur2013millimeter,alkhateeb2014channel,tang2018high,liu2017Jsac,3GPPR1,3GPPR2,Xiao2016,zhang2017codebook,yaman2016reducing} and has drawn considerable attention due to its simplicity and high performance. In spatial scanning, pre-synthesised beams covering the angular intervals of interest are examined so as to determine the best possible BS-UE beam pair that aligns with the dominant path of the channel.\footnote{If prior knowledge of the direction of the dominant path is available, spatial scanning can be performed in a reduced space, e.g., over path skeletons~\cite{khosravi2019efficient}. In this work, we focus on initial alignment where there is no such prior knowledge.} Once this BS-UE beam pair has been found, it can be used immediately for subsequent data transmission, and thus explicit estimate of the channel coefficients is not needed. Due to the sparsity of mmWave channels, such beams, upon correct identification, can provide spectrum efficiencies very close to that from the optimal BS-UE beams constructed with perfect channel knowledge~\cite{raghavan2016beamforming}.

Much effort has been devoted to developing beam search algorithms that attain satisfactory performance (e.g., the achievable spectrum efficiency using the beams identified) with short training time~\cite{alkhateeb2014channel,Xiao2016,zhang2017codebook,liu2017Jsac,min2019TWC,tang2018high}. The ability to achieve satisfactory beam search performance with less training time is clearly a desirable goal as it gives more time for data transmission. Such  algorithms also reduce access delay caused by beam search, which is helpful in meeting low latency targets in 5G and future networks. They also extend the coverage range of mmWave BSs, as more efficient algorithms give better chances for users further away from BSs to find their optimal beam pair within a limited training time.

Hierarchical Search (HS) is a classical approach to beam alignment~\cite{hur2013millimeter,alkhateeb2014channel}. It uses tree-search in conjunction with hierarchical multi-resolution codebooks to reduce the number of beam measurements and hence search time. (The beam search method specified in IEEE 802.11ad shares a similar spirit to HS, where searches are performed in two hierarchical levels at the sector level and the beam level~\cite{yaman2016reducing}.) Subsequent investigations into HS~\cite{Xiao2016,zhang2017codebook} proposed advanced multi-resolution codebook designs.  Recent research has shown that HS incurs significant performance loss and is inferior to Exhaustive Search (ES) when the pre-beamforming signal-to-noise ratio (SNR) of the dominant channel path is low~\cite{liu2017Jsac}. In addition, more recent study \cite{min2019TWC} has proposed a generalised ES algorithm called optimised two-stage search (OTSS) to further enhance ES. During the first stage of OTSS, beam combinations unlikely to be optimal are eliminated using only a fraction of the training time,  leaving the best beam combination to be identified from a small set of candidates in the second stage. A similar algorithm is proposed in~\cite{tang2018high}. Both methods achieve the same beam search performance as ES within shorter time.

All the above algorithms are \emph{non-adaptive} as they require the amount of training time to be optimised prior to search. This is problematic because different SNR scenarios require different amounts of time to attain satisfactory search accuracy, e.g., a high SNR scenario requires less time than a low SNR scenario to reach the same  search accuracy. And the SNR necessary to obtain the optimised training time is unknown in initial beam search and varies across different users. The drawback of using a fixed overhead for all scenarios is clear: If the training time is set to ensure the performance for low-SNR users, it will lead to unnecessarily large overhead for high-SNR users; if the training time is chosen to ensure only  high-SNR users, then low-SNR users will experience poor beam search performance.

To overcome the above shortcoming, we propose a new spatial-scanning beam search algorithm called Iterative Deactivation and Beam Shifting (IDBS). Our  proposed IDBS algorithm is an \emph{adaptive} approach in the sense that it uses a suitable amount of training time in all cases to achieve satisfactory beam search performance: In low SNR, IDBS uses longer training times to achieve good beam search accuracy, while in high SNR, IDBS uses shorter training time leaving more time for data transmission and reducing access delay. IDBS achieves this goal by adopting a Bayesian criterion, which does not require prior knowledge of the SNR or of the fading statistics.

Our proposed IDBS differs from conventional non-adaptive algorithms that have a fixed training time, which limits their flexibility in achieving good balances between beam search accuracy and the overhead spent in different scenarios. IDBS also differs from other iterative algorithms with variable training time~\cite{Gehard2019ICC,kokshoorn2017millimeter}, because IDBS makes no assumptions as to the underlying channel statistics and does not require prior knowledge of the channel. As a comparison, \cite{Gehard2019ICC} uses a reference SNR value and a multiple hypothesis testing method to calculate the initial required training time, which is then updated based on channel estimates made during beam search. The authors in~\cite{kokshoorn2017millimeter} assume i.i.d. complex Gaussian statistics for the path gains with known variance, from which the successful beam alignment probability was derived and an adaptive algorithm was developed.

IDBS owes its adaptivity to the use of a Bayesian criterion together with an improper prior. As we will show, this leads to a low complexity, robust and transparent algorithm. The best beams are thus identified using a probabilistic criterion which rejects beams unlikely to be the best when compared with the likelihood score for the current maximum. Only a single parameter in the form of an acceptance~probability is needed. As our empirical results will show, the same parameter choice gives satisfactory performance across a wide range of channels.

Another important feature of IDBS is that it can achieve higher spatial resolution than the original codebook used for training via a procedure we call beam shifting. This feature does not require any additional training overhead, including the use of an oversampled codebook for spatial scanning, a method suggested by~\cite{Xiao2016}. In fact, by allowing beam shifting, IDBS makes an overhead saving by avoiding the excessive overhead needed to distinguish between two comparable beams.

The remainder of the paper is as follows. Section \ref{Sec:system_signal} presents the problem formulation and the system model which is followed by a detailed description
of IDBS in Section \ref{Sec:algorithm}. Extensive numerical results are presented in Section  \ref{Sec:Toy_example} and Section \ref{Sec:Numerical} which are used
to examine the algorithm features and to show its superior performance as compared to non-adaptive approaches. Conclusions are drawns in Section~\ref{Sec:Conclusions}. Some theoretical results are deferred to the Appendices.

\section{System Model and Problem Formulation}\label{Sec:system_signal}
We consider a mmWave communications system where Uniform Linear Arrays (ULAs) are equipped at both BS and UE. In this system, BS and UE cooperatively send and measure pilot signals with different narrow beams pre-synthesised to jointly cover the angular intervals of interest at BS and UE. By searching through the pre-synthesised codebooks, the goal is to find the BS-UE beamformers that align well to the strongest channel path, i.e., maximising  the effective channel gain after beamforming.

Let $N_T$ be the number of antennas at BS and $N_R$ the number of antennas at UE. Denote ${\cal L} = \{\mathbf{u}_1,\ldots,\mathbf{u}_{L}\}$ as the UE codebook and ${\cal S} = \{\mathbf{w}_1,\ldots,\mathbf{w}_{S}\}$ as the BS codebook. In this work, we consider DFT beamformers in ${\cal L}$ and ${\cal S}$:
\begin{equation}\label{eq:Rx_DFT}
\mathbf{u}_l = \frac{1}{\sqrt{N_R}}\left[1,e^{-i\frac{2\pi d}{\lambda}\psi^c_l},\ldots,e^{-i\frac{2\pi d}{\lambda}(N_R-1){\psi}^c_l}\right]^T,
\end{equation}
\begin{equation}\label{eq:Tx_DFT}
\mathbf{w}_s = \frac{1}{\sqrt{N_T}}\left[1,e^{-i\frac{2\pi d}{\lambda}\phi^c_s},\ldots,e^{-i\frac{2\pi d}{\lambda}(N_T-1){\phi}^c_s}\right]^T.
\end{equation}
Here ${\psi}^c_l \doteq \sin(\tilde{\psi}^c_l )\in [-1,1]$ (${\phi}^c_s \doteq \sin(\tilde{\phi}^c_s )\in [-1,1]$) and  $\tilde{\psi}^c_l \in [-90^\circ,+90^\circ]$ ($\tilde{\phi}^c_s \in [-90^\circ,+90^\circ]$) is the beam centre of $\mathbf{u}_l$ ($\mathbf{w}_s$). We also consider that the beams are evenly spaced in the angular space of interest with inter-beam distance $2/N_R$ at the UE  ($2/N_T$ at the BS): ${\psi}^c_{l+1} = {\psi}^c_l+2/N_R$ (${\phi}^c_{s+1} = {\phi}^c_s+2/N_T$).
Denote also $\Psi_\ell = [{\psi}^c_\ell-1/N_R,{\psi}^c_\ell+1/N_R]$, $\ell=1,\ldots,L$ ($\Phi_s = [{\phi}^c_s-1/N_T,{\phi}^c_s+1/N_T]$, $s=1,\ldots,S$) as the intended coverage intervals of the UE (BS) beams.  The beamforming gain of beamformer $\mathbf{u}_\ell$ ($\mathbf{w}_s$)  at a direction within its intended coverage interval, i.e., $\psi \in \Psi_\ell$ ($\phi \in \Phi_s$) is typically much stronger than the gain of other beams:
	\begin{align}\label{Eq:desire_pattern}
	GR_\ell(\psi)\doteq|\mathbf{u}_\ell^{\dag}\mathbf{a}_R(\psi)|^2\gg |\mathbf{u}_j^{\dag}\mathbf{a}_R(\psi)|^2, ~\forall~\psi \in \Psi_\ell, \ell\neq j, \nonumber \\
	GT_s(\phi)\doteq|\mathbf{w}_s^{\dag}\mathbf{a}_T(\phi)|^2\gg |\mathbf{w}_p^{\dag}\mathbf{a}_T(\phi)|^2, ~\forall~\phi \in \Phi_s, s\neq p,
	\end{align}
	where $\mathbf{a}_R(\psi) \in {\mathbb C}^{N_R \times 1}$ and $\mathbf{a}_T(\phi) \in {\mathbb C}^{N_T \times 1}$ are the array response vectors of BS and UE, respectively.\footnote{Note that when the direction $\psi$ is close to the boundary of $\mathbf{u}_\ell$, e.g., $\phi_\ell$, the beamforming gain of $\mathbf{u}_{l+1}$ may be comparable to $\mathbf{u}_\ell$, which depends on the number of antennas and the beam synthesis technique used. Note also that although we have assumed DFT beams that have the peak gain equal to the number of antennas, the proposed beam search method to be presented can also be applied to other beams that satisfy the property in~\eqref{Eq:desire_pattern}. There is a rich literature~\cite{Xiao2016,zhang2017codebook,xiao2017codebook,xiao2018enhanced,fan2018flat} in designing beam synthesis techniques that produce beam patterns with flexible beam width and the properties specified by \eqref{Eq:desire_pattern}.}
	
\begin{figure*}[t]
	\centering
	\includegraphics[width=0.9\textwidth]{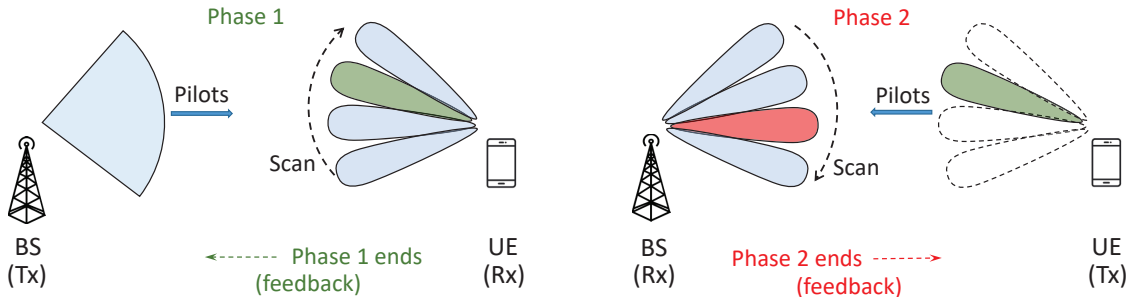}
	\caption{An illustration of two-phase search for beam alignment: In Phase 1, BS is the transmitter (Tx) while UE is the receiver (Rx). In Phase 2, BS is the Rx while UE is the Tx.  \label{fig_llinearSearch}}
\end{figure*}

With the DFT beams at both BS and UE, the total number of beam pairs, $S\times L$, is proportional to $N_T\times N_R$, which can be very large. Instead of performing a direct search over the $S\times L$ BS-UE beam pairs~\cite{hur2013millimeter}, we propose an alternative beam alignment protocol which we call two-phase spatial scanning, as illustrated in Fig.~\ref{fig_llinearSearch}. In the first phase, the BS transmits pilot signals in the downlink using a wide beam covering its entire angular interval of interest and the UE measures the pilots using beams in ${\cal L}$. Upon finding the best beam at UE, the UE sends a signal informing the BS that phase 1 has ended. In the second phase, the UE transmits pilot signals in the uplink using the beam identified in Phase 1 and the BS measures the pilots using beams from ${\cal S}$. Once the best BS beam is identified, the BS sends a signal to the UE that Phase~2 has ended. This completes
beam alignment. The two-phase protocol reduces the number of beams examined from $S\times L$ to $S+L$. We note that the two-phase protocol requires a dedicated feedback channel to exchange the completion message at each phase, which may be realised using low-frequency carriers. For instance, in 5G non-standalone (5G NSA) implementation, 5G radios use LTE channels to transmit control signalling. We also note that this two-phase protocol has been considered by several existing works in the literature, see~\cite{palacios2017tracking} for an example.

We assume a block fading model where the channel remains unchanged during spatial scanning and consider that the BS and UE are synchronised (see~\cite{liu2017design,barati2014dreictional} for possible mmWave synchronisation techniques). We also consider that spatial scanning is performed via pilots transmitted in a coherent frequency band, which can be a few tens of MHz or higher~\cite{rappaport2015wideband}. Denote $\mathbf{H}\in {\mathbb C}^{N_R \times N_T}$ as the channel matrix between BS and UE. Let the unit-norm vector $\mathbf{w}\in {\mathbb C}^{N_T \times 1}$ be the wide beam that the BS adopts in Phase 1 and define $\mathbf{u}^*$ and $\mathbf{w}^*$ as
\begin{equation}\label{Eq:genie_aid}
\mathbf{u}^* \doteq \arg\max_{\mathbf{u}_\ell\in {\cal L}}|\mathbf{u}^{\dag}_\ell \mathbf{H}\mathbf{w}|,~\mathbf{w}^* \doteq \arg\max_{\mathbf{w}_s\in {\cal S}}|\mathbf{u}^{*\dag} \mathbf{H}\mathbf{w}_s|.
\end{equation}
Clearly, $\mathbf{w}^*$ and $\mathbf{u}^*$ are the best BS-UE beams that will be selected in the two-phase search when there is no noise. In practice beam search has to find a pair of BS and UE beams which achieve good spectrum efficiency compared to that of the best beams $\mathbf{w}^*$ and $\mathbf{u}^*$, using noisy measurements provided by pilot signals. As we will discuss in the next section, this outcome can be obtained by adopting a Bayesian criterion for the unknown channel.

 \section{Beam Search with Iterative Deactivation and Beam Shifting}\label{Sec:algorithm}
 The beam search problems in Phase 1 and Phase 2 are equivalent, i.e., identifying the best beam from a set of noisy measurements of the candidate beams. The Iterative Deactivation and Beam Shifting (IDBS) algorithm is designed to solve this general beam search problem, and will be applied to both Phase 1 and Phase 2. In this section, we will focus on Phase 1 to explain the details of IDBS. We note that in both phases it is the {\it receiver} that does the beam search. The transmitter (Tx) sends pilot symbols and does not need to know the beam selections made at the receiver (Rx). In phase~1 the Rx is the UE, in phase~2 the Rx is the BS.

\subsection{IDBS Sketch and Signal Model}
IDBS is an iterative algorithm and it works by maintaining two sets of beams, namely the \emph{active set} and the \emph{inactive set}. In each iteration, training measurements are collected at the Rx using beams in the active set.\footnote{Note that IDBS is applied to both Phase 1 and Phase 2. Thus in each phase, there are multiple IDBS iterations.} After collecting the measurements (in each iteration), both the active and inactive set are updated. The updates are performed at the Rx by moving the active beams that are deemed to be relatively weak in comparison with the other active beams from the active set to the inactive set - we call this step ``beam deactivation".  Thus the active set reduces progressively in  size until a stopping point is reached and the algorithm terminates. At this point final beam selection is made.  Note that sometimes the active set can grow, since the algorithm allows for a deactivated beam to be placed back in the active set under exceptional  circumstances- we call this step ``inactive beam restoration". IDBS thus has the following key components which will be discussed in detail below: 1) Beam deactivation; 2) Inactive beam restoration; 3) Stopping criteria and 4) Final beam selection.

Of these four components, the one at the heart of the algorithm is ``Beam deactivation". Here a Bayesian prior, reflecting the
uncertainty of the unknown channel gain, is used in calculating the a posteriori probability that one beam has a stronger gain than another.
Beams which are seen as unlikely to be the strongest according to a given probability threshold are deactivated and no longer measured. The idea is that at high SNR the best beam returns large measurement values with high probability and weak beams are thus quickly deactivated, saving search time. On the other hand, at low SNR, the same criterion will act to retain all the beams, as their measured values are all similar, until sufficient measurements are taken to discriminate between them.

In most cases the algorithm will  stop as a result of only one beam being left in the active set. However, it will also halt when there are two beams left covering adjacent angular intervals,  to deal with the case that the dominant path of the channel falls between two beams.

The benefit of the beam search outlined in this section comes from finding the best Rx beam more quickly than with exhaustive search. This not only saves time, but also energy since the Tx can stop sending pilots earlier. This benefit would disappear in a fully digital system where all the receiver beams could be tested simultaneously. However, in analog beamforming (as considered at mm-wave) the beams are switched, and hence tested one at a time.

Before describing the four key components of the algorithm in detail, we first define some necessary notations for Phase 1 as follows.

Let ${\cal L}(t) \subseteq {\cal L}$ be the active set at iteration $t~(t=1,2,\cdots)$, and $n_\ell(t)$ be the accumulated number of pilot symbols for beam $\ell$ from iteration $1$ to iteration $t$. At each iteration, each active beam is sampled once such that at iteration $t$, the number of accumulated pilot symbols for all the active beams are the same: $n_\ell(t) = tn_0$, $\ell \in {\cal L}(t)$, where $n_0$ is the number of pilot symbols in each measurement.
Denote $h_\ell\doteq \mathbf{u}^{\dag}_\ell \mathbf{H}\mathbf{w}$ as the effective channel after Tx beam $\mathbf{w}$ and Rx beam $\mathbf{u}_\ell$  and let $\mathbf{s}_\ell(t) \in {\mathbb C}^{n_\ell(t) \times 1}$ be the vector containing all the pilot symbols used to measure beam $\mathbf{u}_\ell$ from iteration~1 to iteration~$t$ in Phase 1, with $\|\mathbf{s}_\ell(t)\|_2^2 = P_Tn_\ell(t)$ and $P_T$ being the transmission power. The measured signal at Rx, $\mathbf{y}_\ell(t) \in {\mathbb C}^{n_\ell(t) \times 1}$, can then be represented as:
\begin{equation}\label{Eq:received_signal_t}
\mathbf{y}_\ell(t) = h_\ell\mathbf{s}_\ell(t) + \mathbf{z}_\ell(t).
\end{equation}
Here  $\mathbf{z}_\ell(t)$ is the additive noise vector whose elements are assumed to be independent circularly symmetric Gaussian random variables with known variance $\sigma^2$.
The combined matched filter output is
\begin{equation}\label{Eq:matched_filter_t}
r_\ell(t) = \frac{1}{\|\mathbf{s}_\ell(t)\|^2_2} \mathbf{s}^{\dag}_\ell (t)\mathbf{y}_\ell(t).
\end{equation}
It can be seen that  $r_\ell(t)$ follows a complex Gaussian distribution with mean $h_\ell$ and variance $\sigma^2/\|\mathbf{s}_\ell(t)\|^2_2$, therefore
\begin{equation}\label{Eq:chi_square_t}
T_\ell(t) \doteq {2n_\ell(t)P_T|r_\ell(t) |^2}/{\sigma^2}\sim \chi_2^2(\eta_\ell(t)),
\end{equation}
i.e., $T_\ell(t) $ follows a non-central chi-square distribution with Degree of Freedom (DoF) equal to 2 and a non-centrality parameter $\eta_\ell(t)$ given by:
\begin{equation}\label{Eq:noncentrality}
\eta_\ell(t) = {2n_\ell(t) P_T|h_\ell|^2}/{\sigma^2}.
\end{equation}

\subsection{Beam Deactivation}\label{Sec:probability}
At each iteration, the set of beams to be deactivated, defined as ${\cal L}_E(t) \subseteq {\cal L}(t)$, are identified according to a single test function $f(\cdot)$, which makes a comparison between pairs of beams and deactivates the one that appears to be much weaker. As mentioned briefly in Section I, this test function is obtained according to the posterior probability that a beam has a smaller underlying channel gain than another. 

We now present the derivation of the test function $f(\cdot)$ by deriving the posterior probability $\tilde{f}(\cdot)$ that beam $\ell$ has a stronger underlying effective channel than beam $j$, given observations $T_\ell(t)$ and $T_j(t)$.  $\tilde{f}(T_\ell(t),T_j(t))$ can be represented as:
\begin{align}
\tilde{f}(T_\ell(t),T_j(t))& \doteq  \Pr\left \{ |h_{\ell}|>|h_j| \big | (T_{\ell}(t),T_j(t))\right \}\nonumber\\
&= \Pr\left \{ \frac{\eta_{\ell}}{n_\ell(t)}>\frac{\eta_j}{n_j(t)} \big | (T_{\ell}(t),T_j(t))\right \}  \label{Eq:post_prob_1}\\
&=\Pr\left \{\eta_\ell>{\eta_j}\big | (T_\ell(t),T_j(t))\right \} , \label{Eq:post_prob_2}
\end{align}
where \eqref{Eq:post_prob_1} follows from \eqref{Eq:noncentrality}, and \eqref{Eq:post_prob_2} is due to the fact that $n_j(t) = n_\ell(t)$, $~\forall~\ell, j \in {\cal L}(t)$.

To calculate~\eqref{Eq:post_prob_2}, we need to know $p( \eta_\ell,\eta_j|T_{\ell}(t),T_j(t) )$, the probability density of $(\eta_\ell,\eta_j)$ given observations $(T_{\ell}(t),T_j(t))$. Since there is no prior knowledge as to which beam is stronger than another one, nor the strengths of the effective channels for the beams to be searched in initial alignment, it is reasonable to treat them equally.  For this reason, we suppose that the non-centrality parameters are i.i.d. random variables that are uniformly distributed in $[0,\eta^+]$. We further suppose that $\eta^+\rightarrow \infty$ as $\eta_\ell$ can be large due to the possibility of high-SNR and large $n_\ell$. We note that the uniform prior with $\eta^+\rightarrow \infty$ is often referred as ``improper prior", which is commonly adopted in statistical inference when there is lack of knowledge as to which prior distribution to choose.
	
With the above assumptions, $p( \eta_\ell,\eta_j|T_{\ell}(t),T_j(t) )$ can be represented as
\begin{align}
p( \eta_\ell,\eta_j|T_{\ell}(t),T_j(t) ) & =  p( \eta_\ell|T_{\ell}(t))p(\eta_j|T_j(t) )  \label{Eq:density_final_1} \\
& = \frac{p(T_\ell(t)|\eta_\ell)p_{\eta_\ell}}{p(T_{\ell}(t))}  \frac{p(T_j(t)|\eta_j)p_{\eta_j}}{p(T_j(t))},  \label{Eq:density_final}
\end{align}
where $p(\eta_j|T_j(t) )$ is the probability density of $\eta_j$ given observation $T_j(t) $, $p_{\eta_j} = 1/\eta^+$ is the prior probability density of $\eta_j$, $p(T_j(t)|\eta_j)$ is the likelihood and $p(T_j(t))$ is the density of $T_j(t)$. Eq.~\eqref{Eq:density_final_1} follows from the fact that the measurements of the various beams are independent as they are collected at different times, while \eqref{Eq:density_final} is due to Bayes' theorem.

Now denote $g_{\eta}(x)$ as the probability density function of $\chi_2^2(\eta)$,
\begin{equation}\label{eq:chi2_density}
g_{\eta}(x) = \frac{1}{2}\exp(-\frac{x+\eta}{2})I_0(\sqrt{\eta x}),
\end{equation}
where $I_0(x)$ is the modified Bessel function of the first kind of zero-order~\cite{32133}.

Since the non-centrality parameters $\eta_\ell$'s are drawn uniformly from $[0,\eta^+]$, it follows that
\begin{align}
p( \eta_\ell | T_\ell ) &= \frac{p(T_\ell|\eta_\ell)p_{\eta_\ell}}{p(T_{\ell})}  = \frac{1/\eta^+ g_{\eta_\ell}(T_\ell)}{ p(T_\ell)} \nonumber\\
&= \frac{1/\eta^+ g_{T_\ell}(\eta_\ell)}{1/\eta^+ \int_0^{\eta^+} g_{T_\ell}(\eta) d\eta} \label{Eq:rr}\\
& =   \frac{ g_{T_\ell}(\eta_\ell)}{1 - Q_1(\sqrt{T_\ell}, \sqrt{\eta^+})}
\end{align}
where $Q_1$ is the Marcum Q-function~\cite{32133}, and to obtain \eqref{Eq:rr} we have used the reciprocity relationship $g_{\eta}(x) = g_x(\eta)$.

To reflect the dependence on $\eta^+$, we rewrite $\tilde{f}$ of \eqref{Eq:post_prob_2} as $\tilde{f}_{\eta^+}$ which can be obtained as:
\begin{align}
&\tilde{f}_{\eta^+}(T_\ell(t),T_j(t)) = \int\limits_{\eta^+\geq\eta>\nu\geq0} \frac{g_{T_\ell(t)}(\eta)} {\lb 1 - Q_1(\sqrt{T_\ell(t)} , \sqrt{\eta^+})\rb}\times \frac{g_{T_j(t)}(\nu) } {\lb 1 - Q_1(\sqrt{T_j(t)}, \sqrt{\eta^+ })\rb} d\eta d\nu.
\end{align}

By taking $\eta^+ \rightarrow \infty$, we can obtain the test function $f = \lim_{\eta^+ \rightarrow \infty} \tilde{f}_{\eta^+}$ as follows:
 \begin{align}
f(T_\ell(t),T_j(t)) &= \int_{\eta>\nu} g_{T_\ell(t)}(\eta) g_{T_j(t)}(\nu)   d\eta d\nu \label{Eq:finite_prob}.
\end{align}
which is the posterior probability of error under an improper (uniform) prior.  

The test function $f(T_k(t),T_j(t))$ is compared against a fixed threshold $\alpha \in (0,1)$. By comparing all possible pairs in the active set ${\cal L}(t)$, the deactivation set ${\cal L}_E(t)$ can be identified as:
\begin{equation}\label{Eq:eliminate_set}
	{\cal L}_E(t) = \{j: j\in {\cal L}(t), \max_{k\neq j, k\in{\cal L}(t)}f(T_k(t),T_j(t)) >\alpha\}.
\end{equation}

We now present some properties of $f(T_\ell(t),T_j(t))$  which are used to simplify the deactivation step.
\begin{theorem}\label{theorem_monoto}
Function $f(T_\ell(t),T_j(t))$ is monotonically increasing with respect to $T_\ell(t)$ and monotonically decreasing with respect to $T_j(t)$.
\end{theorem}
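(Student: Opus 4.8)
The plan is to read $f$ probabilistically. Write $a=T_\ell(t)$ and $b=T_j(t)$, and recall from the reciprocity $g_\eta(x)=g_x(\eta)$ that, as a function of $\eta$, the factor $g_a(\eta)=\tfrac12 e^{-(a+\eta)/2}I_0(\sqrt{a\eta})$ is exactly the density of a non-central $\chi^2_2$ variable with non-centrality $a$. Hence $f(a,b)=\Pr\{X_a>Y_b\}$ for independent $X_a\sim\chi^2_2(a)$ and $Y_b\sim\chi^2_2(b)$. Integrating out $\eta$ first and using the survival-function identity $\int_\nu^\infty g_a(\eta)\,d\eta=Q_1(\sqrt a,\sqrt\nu)=\Pr\{X_a>\nu\}$ collapses the double integral to the one-dimensional form
\[
f(a,b)=\int_0^\infty g_b(\nu)\,Q_1(\sqrt a,\sqrt\nu)\,d\nu,
\]
in which the entire dependence on the first argument sits in the survival function $Q_1(\sqrt a,\sqrt\nu)$.

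First I would establish the monotonicity in the first argument. Since $g_b(\nu)\ge 0$, it is enough to show that $Q_1(\sqrt a,\sqrt\nu)$ is non-decreasing in $a$ for each fixed $\nu$, i.e.\ that the non-central $\chi^2_2$ family is stochastically increasing in its non-centrality (first-order stochastic dominance). The cleanest route is the monotone likelihood ratio (MLR): for $a_2>a_1$,
\[
\frac{g_{a_2}(\eta)}{g_{a_1}(\eta)}=e^{-(a_2-a_1)/2}\,\frac{I_0(\sqrt{a_2\eta})}{I_0(\sqrt{a_1\eta})},
\]
and MLR follows once this is shown non-decreasing in $\eta$; MLR then implies first-order stochastic dominance, which gives exactly the required monotonicity of $Q_1(\sqrt a,\sqrt\nu)$ in $a$, and hence of $f(a,b)$ in $a=T_\ell(t)$.

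The monotonicity in the second argument then comes for free. Because $X_a,Y_b$ are independent and continuous, $\Pr\{X_a=Y_b\}=0$, so $f(a,b)+f(b,a)=1$; this is also immediate on swapping the dummy variables $\eta\leftrightarrow\nu$ in the double integral. Having shown that $f$ is non-decreasing in its first slot, $f(b,a)$ is non-decreasing in $b$, whence $f(a,b)=1-f(b,a)$ is non-increasing in $b=T_j(t)$, with no further computation.

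The hard part is the MLR step, which reduces to showing that $I_0(\sqrt{a_2}\,u)/I_0(\sqrt{a_1}\,u)$ is non-decreasing in $u=\sqrt\eta$. Taking the logarithmic derivative and using $I_0'=I_1$ turns this into $\sqrt{a_2}\,\rho(\sqrt{a_2}\,u)\ge\sqrt{a_1}\,\rho(\sqrt{a_1}\,u)$ with $\rho\doteq I_1/I_0$, which holds because $\rho$ is non-negative and non-decreasing on $[0,\infty)$ --- the classical monotonicity of the Bessel ratio $I_1/I_0$ --- so both the prefactor $\sqrt a$ and the argument of $\rho$ grow with $a$. Quoting (or re-deriving) the monotonicity of $I_1/I_0$ is the single genuinely non-elementary ingredient. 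If one prefers a fully self-contained argument, I would instead differentiate the Marcum form directly: integration by parts using $\tfrac{d}{dt}[t I_1(\alpha t)]=\alpha t\,I_0(\alpha t)$ yields the identity $\partial_\alpha Q_1(\alpha,\beta)=\beta\,e^{-(\alpha^2+\beta^2)/2}I_1(\alpha\beta)\ge 0$, so $\partial f/\partial a\ge 0$ follows immediately from the displayed integral, and the decreasing claim again follows from $f(a,b)+f(b,a)=1$.
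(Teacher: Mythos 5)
Your proof is correct and follows essentially the same route as the paper's: both reduce the double integral to a one-dimensional integral against the Marcum function $Q_1(\sqrt{a},\sqrt{\nu})$, deduce monotonicity in one argument from the monotonicity of $Q_1$ in its first argument, and obtain the other direction from the complement identity $f(a,b)+f(b,a)=1$. The only difference is that the paper simply cites the monotonicity of $Q_1(a,b)$ in $a$, whereas you supply a proof of it (via MLR/stochastic dominance, or the derivative identity $\partial_\alpha Q_1(\alpha,\beta)=\beta e^{-(\alpha^2+\beta^2)/2}I_1(\alpha\beta)$, both of which are valid), making your argument more self-contained.
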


\begin{theorem}\label{theorem_2}
The set of beams being eliminated at iteration $t$, i.e., ${\cal L}_E(t)$ in \eqref{Eq:eliminate_set}, can be equivalently identified by:
	\begin{equation}\label{Eq:eliminate_set_simplified}
		{\cal L}_E(t) = \{j: j\in {\cal L}(t)\setminus\ell^*(t), f(T_{\ell^*(t)}(t),T_j(t)) >\alpha\},
	\end{equation}
where $\ell^*(t) = \arg \max_{\ell\in {\cal L}(t)} T_\ell(t)$. Further,
${\cal L}_E(t) = \emptyset$
if $f(T_{\ell^*(t)}(t),T_{\ell^-(t)}(t))\leq \alpha$, where $\ell^-(t) = \arg \min_{\ell\in {\cal L}(t)} T_\ell(t).$
\end{theorem}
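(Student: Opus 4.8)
The plan is to lean entirely on the monotonicity established in Theorem~\ref{theorem_monoto}, which lets me collapse the pairwise maximum in \eqref{Eq:eliminate_set} down to a single comparison against the strongest observed beam. First I would fix an iteration and suppress the argument $t$ for brevity. For any beam $j \neq \ell^*(t)$, the index $\ell^*(t)$ is itself admissible in the maximisation $\max_{k \neq j} f(T_k, T_j)$, and since $T_{\ell^*(t)} \geq T_k$ for every $k \in {\cal L}(t)$ while $f(\cdot, T_j)$ is nondecreasing in its first argument, the maximum is attained at $k = \ell^*(t)$; that is, $\max_{k \neq j} f(T_k, T_j) = f(T_{\ell^*(t)}, T_j)$. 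Hence the membership condition in \eqref{Eq:eliminate_set} reduces to $f(T_{\ell^*(t)}, T_j) > \alpha$ for every $j \neq \ell^*(t)$, which is precisely \eqref{Eq:eliminate_set_simplified}.

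The remaining point for the first claim is to verify that $\ell^*(t)$ itself never belongs to ${\cal L}_E(t)$, so that restricting $j$ to ${\cal L}(t) \setminus \ell^*(t)$ discards nothing. For $j = \ell^*(t)$ the relevant quantity is $\max_{k \neq \ell^*(t)} f(T_k, T_{\ell^*(t)})$, and by monotonicity in the first argument together with $T_k \leq T_{\ell^*(t)}$ each term is bounded above by $f(T_{\ell^*(t)}, T_{\ell^*(t)})$. I would then argue that $f(x, x) = 1/2$: from \eqref{Eq:finite_prob} with equal arguments the integrand $g_x(\eta) g_x(\nu)$ is symmetric in $(\eta, \nu)$, and since $g_x(\cdot)$ is a proper density in the $\eta^+ \to \infty$ limit (cf.\ \eqref{Eq:rr}), the two half-planes $\{\eta > \nu\}$ and $\{\nu > \eta\}$ contribute equally and sum to one, giving $1/2$. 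Thus $\max_{k \neq \ell^*(t)} f(T_k, T_{\ell^*(t)}) \leq 1/2 < \alpha$ under the natural operating assumption $\alpha > 1/2$ (a beam is deactivated only when more than half the posterior mass favours its rival), so $\ell^*(t) \notin {\cal L}_E(t)$.

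For the emptiness criterion I would again invoke Theorem~\ref{theorem_monoto}, this time the decreasing dependence on the second argument. Every $j \in {\cal L}(t)$ satisfies $T_j \geq T_{\ell^-(t)}$ by definition of $\ell^-(t)$, so $f(T_{\ell^*(t)}, T_j) \leq f(T_{\ell^*(t)}, T_{\ell^-(t)})$. If the right-hand side is at most $\alpha$, then no $j$ can satisfy the strict inequality in \eqref{Eq:eliminate_set_simplified}, and ${\cal L}_E(t) = \emptyset$ follows at once.

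I expect the only genuine subtlety to be the treatment of $j = \ell^*(t)$, where the clean argument requires the identity $f(x,x) = 1/2$ together with $\alpha > 1/2$; everything else is a direct consequence of the two monotonicity directions. I would flag that the assumption $\alpha > 1/2$ is implicit in this equivalence and worth stating, and that ties among the $T_\ell(t)$ values (a measure-zero event) can be broken arbitrarily without affecting either set.
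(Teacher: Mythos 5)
Your proposal is correct and takes the same route as the paper: the paper's entire ``proof'' is the single remark that Theorem~\ref{theorem_2} follows directly from the monotonicity of $f$ established in Theorem~\ref{theorem_monoto}, and you supply precisely that argument in full (max attained at $\ell^*(t)$ by monotonicity in the first argument; emptiness by monotonicity in the second). Your explicit handling of the $j=\ell^*(t)$ case via $f(x,x)=1/2$ (a fact the paper states only later, in Section~III-C) and your observation that the equivalence implicitly requires $\alpha>1/2$ even though the paper only assumes $\alpha\in(0,1)$ are careful additions that the paper leaves unstated.
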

Theorem~\ref{theorem_monoto} is proved in Appendix~\ref{Proof_theorem1}. Theorem~\ref{theorem_2} follows directly from the monotonicity of function $f(\cdot)$ in Theorem~\ref{theorem_monoto}. Theorem~\ref{theorem_2} shows that it is sufficient to compare beam $\ell^*(t) = \arg\max_{\ell\in{\cal L}(t)}T_{\ell}(t)$, the one with the strongest measurement, against the other $|{\cal L}(t)|-1$ beams.

\begin{lemma} \label{Lemma_infinite_sum}
Function $f(x,y)$ has the following closed-form representation:
\begin{align}
&f(x,y) = 1-\frac{1}{2}\exp \left (-\frac{x}{2} - \frac{y}{4}\right ) \sum_{m=1}^{+\infty} \left(\frac{1}{2}\right)^m L_m(-\frac{y}{4})\sum_{k=0}^{m-1}\frac{(\frac{x}{2})^k}{k!}. \label{eq:close_form}
\end{align}
where $L_m(a)$ is the Laguerre polynomial of order $m$.
\end{lemma}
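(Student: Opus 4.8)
The plan is to start from the integral definition \eqref{Eq:finite_prob} and collapse the double integral over $\{\eta>\nu\}$ into a single integral. Fixing $\nu$ and integrating over $\eta\in(\nu,\infty)$, the inner integral $\int_\nu^\infty g_x(\eta)\,d\eta$ is exactly the survival function of a $\chi_2^2(x)$ variable, i.e. the Marcum $Q$-function $Q_1(\sqrt{x},\sqrt{\nu})$, by the same identity used to obtain \eqref{Eq:rr} together with the reciprocity $g_\eta(x)=g_x(\eta)$. This yields the one-dimensional representation $f(x,y)=\int_0^\infty g_y(\nu)\,Q_1(\sqrt{x},\sqrt{\nu})\,d\nu$, which is simply the probability that a $\chi_2^2(x)$ observation exceeds an independent $\chi_2^2(y)$ observation.

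Next I would expand $Q_1(\sqrt{x},\sqrt{\nu})$ through its Poisson-mixture series $Q_1(\sqrt{x},\sqrt{\nu}) = e^{-x/2}\sum_{m=0}^{\infty}\frac{(x/2)^m}{m!}\,e^{-\nu/2}\sum_{k=0}^{m}\frac{(\nu/2)^k}{k!}$, which represents the non-central $\chi_2^2(x)$ tail as a Poisson$(x/2)$ average of central $\chi^2_{2(m+1)}$ tails. Substituting $g_y$ from \eqref{eq:chi2_density} and interchanging summation and integration (legitimate since every term is non-negative, so Tonelli applies), the computation reduces to the single moment integral $\int_0^\infty \nu^{k}e^{-\nu}I_0(\sqrt{y\nu})\,d\nu$. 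Evaluating this is the crux of the argument: using the classical identity $\int_0^\infty t^{k}e^{-pt}I_0(2\sqrt{at})\,dt = \frac{k!}{p^{k+1}}e^{a/p}L_k(-a/p)$ with $p=1$ and $a=y/4$ gives $k!\,e^{y/4}L_k(-y/4)$, which is precisely where the Laguerre polynomials enter. Collecting constants then produces the intermediate form $f(x,y)=\tfrac12 e^{-x/2-y/4}\sum_{m=0}^\infty \frac{(x/2)^m}{m!}\sum_{k=0}^{m}\frac{L_k(-y/4)}{2^k}$.

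Finally I would rearrange this double sum into the stated form. Swapping the order of summation (letting $k$ be the outer index so that $m$ runs from $k$ to $\infty$) and using the exponential tail identity $\sum_{m=k}^{\infty}\frac{(x/2)^m}{m!}=e^{x/2}-\sum_{m=0}^{k-1}\frac{(x/2)^m}{m!}$ splits $f$ into two pieces. The first piece is $\tfrac12 e^{-y/4}\sum_{k=0}^{\infty}(1/2)^k L_k(-y/4)$, which equals $1$ by the Laguerre generating function $\sum_{k=0}^\infty L_k(a)t^k=(1-t)^{-1}e^{-at/(1-t)}$ evaluated at $t=1/2$, $a=-y/4$. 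The second piece carries the partial exponential sum $\sum_{m=0}^{k-1}(x/2)^m/m!$, which vanishes for $k=0$, so the surviving sum starts at $k=1$; after relabelling indices this leaves $f=1-\tfrac12 e^{-x/2-y/4}\sum_{m=1}^\infty (1/2)^m L_m(-y/4)\sum_{k=0}^{m-1}\frac{(x/2)^k}{k!}$, which is exactly \eqref{eq:close_form}. I expect the main obstacle to be the Bessel moment integral that produces $L_k(-y/4)$, together with the careful bookkeeping of the index shift (from the Poisson upper limit $m$ to the final partial sum up to $m-1$ starting at $m\ge 1$) and the justification for interchanging the infinite sum with the integral.
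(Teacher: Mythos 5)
Your proof is correct, and it reaches the paper's key intermediate identity $f(x,y)=\tfrac12 e^{-x/2-y/4}\sum_{m}\frac{(x/2)^m}{m!}\sum_{k=0}^{m}2^{-k}L_k(-y/4)$ by a genuinely different route before finishing with the same swap-of-summation and Laguerre generating-function argument. The paper integrates out the \emph{inner} variable first, obtaining $f(x,y)=1-\int_0^\infty g_x(\eta)\,Q_1(\sqrt{y},\sqrt{\eta})\,d\eta$, then expands $I_0(\sqrt{\eta x})$ in its power series and invokes a tabulated result of Cui et al.\ for $\int_0^\infty \eta^k e^{-\eta/2}Q_1(\sqrt{y},\sqrt{\eta})\,d\eta$ (their $R_{a,b}(M,p,1)$), which is where the Laguerre polynomials appear. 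You instead integrate out the \emph{outer} variable to get the complementary reduction $f(x,y)=\int_0^\infty g_y(\nu)\,Q_1(\sqrt{x},\sqrt{\nu})\,d\nu$, expand the Marcum $Q$-function as a Poisson$(x/2)$ mixture of Erlang tails, and evaluate the elementary Bessel moment $\int_0^\infty \nu^k e^{-\nu}I_0(\sqrt{y\nu})\,d\nu=k!\,e^{y/4}L_k(-y/4)$ directly (which indeed follows by termwise integration of the $I_0$ series and the identity $\sum_{j\ge 0}\binom{k+j}{j}\frac{w^j}{j!}=e^{w}L_k(-w)$). Your version buys self-containment: it replaces the external Marcum-$Q$ integral table entry with a classical, easily verified Laguerre moment formula, and the Tonelli justification for the interchange is clean since all terms are non-negative. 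The paper's version buys brevity by citing the ready-made closed form. Both converge to the identical double series and the identical final index bookkeeping, so the result is the same.
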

\begin{proof}
See Appendix~\ref{Appendix_proof_infinite_sum}.
\end{proof}
Lemma~\ref{Lemma_infinite_sum} provides a tractable way to numerically evaluate function $f(x,y)$, in which a finite number of summands can be used to approximate the infinite sum of \eqref{eq:close_form}. The number of summands $M$ required to reach good accuracy depends on the value $x$ and $y$. $M$ can be high when both $x$ and $y$ are large and comparable, making it difficult to compute in real time.

Fortunately, due to the monotonicity properties proved in Theorem~\ref{theorem_monoto}, evaluating $f(x,y)$ and making a comparison with the threshold $\alpha$ to identify ${\cal L}_E(t)$ can be alternatively realised by comparing $y$ to a critical value $\tau_\alpha(x)$, where $f(x,\tau_\alpha(x)) = \alpha$. Because $f(x,y)$ monotonically decreases with respect to $y$, $f(x,y)>\alpha$ is equivalent to $y<\tau_\alpha(x)$. With $\tau_\alpha(x)$, it can be seen that set ${\cal L}_E(t)$ given by~\eqref{Eq:eliminate_set_simplified} can be equivalently identified as follows:
\begin{equation}\label{Eq:eliminate_set_simplified_2}
{\cal L}_E(t) = \{j: j\in {\cal L}(t)\setminus \ell^*(t),  T_j(t) < \tau_\alpha(T_{\ell^*(t)})\}.
\end{equation}

Fig.~\ref{fig:critical_value} plots the critical value $\tau_\alpha(x)$ as a function of $x$ for four different $\alpha$ values. It can be seen that $\tau_\alpha(x)=0$ when $x$ is smaller than some value $x_\alpha$. When $x>x_\alpha$, $\tau_\alpha(x)$ is smooth and can be approximated by a small-degree polynomial and thus does not require to be calculated in real-time. As shown in Fig.~\ref{fig:critical_value}, for $\alpha = 0.9$, the error between a quadratic fit (when $x>x_\alpha$) obtained using least squares (magenta curve) and $\tau_\alpha(x)$ (black curve) is very small. Similar small degree polynomial approximations can be made for wider intervals of $x$ and for other choices of $\alpha$. As an alternative for practical implementations, $\tau_\alpha(x)$ can also be stored as a look-up table for a range of $x$ values.

When a proper threshold $\alpha$ is used or the pre-beamforming SNR is not very low, the probability of the strongest beam being deactivated is small. To see this, we have provided analysis in Appendix~\ref{sec_accept}. The results there show that 1) the probability of the strongest beam ever being deactivated is small when the pre-beamforming SNR is -10 dB across a range of relatively high values of $\alpha$, and the probability of deactivating the strongest beam will be even lower at higher SNRs due to monotonicity; 2) when the pre-beamforming SNR is -20 dB, the probability of the strongest beam being deactivated is still small provided that $\alpha$ is sufficiently high ($\alpha = 0.99$), see Table~\ref{table:UnionBnd} in Appendix~\ref{sec_accept}. Note that the probability of deactivating the strongest beam will be even lower at higher SNRs when $\alpha=0.99$.

\begin{figure}
		\centering
		\includegraphics[width=0.6\textwidth]{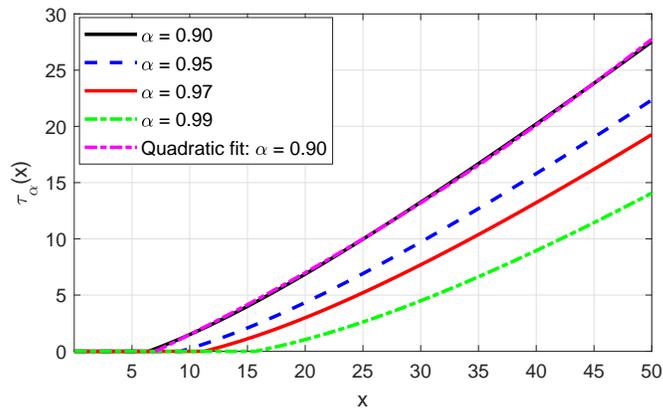}
	\caption{Rejection region area below $\tau_\alpha(x)$ vs $x$. }\label{fig:critical_value}
\end{figure}

\subsection{Inactive Beam Restoration}\label{Sec:restoration}
From Table~\ref{table:UnionBnd} in Appendix~\ref{sec_accept}, it can also be seen that when $\alpha$ is relatively low (eg., $\alpha = 0.90$) and the pre-beamforming SNR is very low (e.g., -20 dB), there is some chance that the best beam is deactivated.  If this occurs, it tends to happen when $t$ is small, because the upper bound of the probability that the best beam for the first time being deactivated at iteration $t$ decreases exponentially with $t$, see Fig.~\ref{fig_MaxUnionRej} in Appendix~\ref{sec_accept}. If the strongest beam is deactivated, the active set may contain a set of beams that all have similar but weak post-beamforming channel gains. The consequence would be $f(T_\ell(t),T_j(t))$ in later iterations would not meet the threshold $\alpha$, because the underlying channels are all comparable (note that $f(x,y)\big|_{x=y}=0.5$), leading to excessive training time wasted, as well as a failure to find the best beam.

This issue is addressed by the  Inactive Beam Restoration step which restores the best beam to the active set if it happens to be deactivated. This step is thus performed when an inactive beam has the strongest matched filter output, i.e., $\ell^*\neq \ell^*(t)$, where $\ell^* = \arg \max_{l\in {\cal L}}|r_\ell(t)|$ is the beam that has the strongest matched filter output from all candidate beams ${\cal L}$, including the active beams and the inactive beams. (We remind the reader that $\ell^*(t)$ is the beam that has the strongest matched filter output from the active beams ${\cal L}(t)$, which is not necessarily the same to $\ell^*$.) If this condition is triggered, beam $\ell^*$ is required to be measured again such that its accumulated number of pilot symbols is $n_{\ell^*}(t) = n_0t$ before it can be moved back to the active set. We pause to note that this is due to the requirement that all active beams must have the same number of measurements such that beam deactivations can be realised by low-complexity operations using a single look-up table.\footnote{It is possible to drop the restriction that $n_\ell(t) = n_0t$, $\forall \ell \in {\cal L}(t)$ and to derive a generalisation to \eqref{eq:close_form}. However, under this circumstance, it is required to store many look-up tables for different pairs of $(n_\ell(t),n_j(t))$, or to compute the generalised form of \eqref{eq:close_form} in real time. This would either require much larger memory space or higher computational complexity and would render the algorithm less attractive from a practical point of view.}  After collecting the additional measurements, beam $\ell^*$ is then added to the active set as: ${\cal L}(t) \leftarrow {\cal L}(t) \cup \{ \ell^*\}$.

As we will show in Section~\ref{Sec:restoration_results}, the restoration step can reduce average overhead and improve spectrum efficiency when relatively low values of $\alpha$ (such as $\alpha = 0.90$)
are used. The inclusion of this step increases the chance that the best beam is retained in the active set, thus increasing the chance that the best beam is selected at the end of the search and also that the weak beams are deactivated more rapidly. The performance improvement from beam restoration becomes negligible for larger values of $\alpha$ as the chance that the strongest beam is deactivated is low. Inactive beam restoration makes IDBS more robust to the choice of $\alpha$, which may be chosen differently to achieve different tradeoffs between overhead and spectrum efficiency (see Sec.~\ref{Sec:Numerical}).

\subsection{Stopping Criteria}
At each iteration, IDBS checks against the following three stopping conditions to decide whether it should terminate.
\begin{enumerate}[I.]
\item Only one beam remains active:
	\begin{equation}\label{Eq:stop_2}
		|{\cal L}(t)|=1.
	\end{equation}
\item Two adjacent beams are active
	\begin{align}\label{Eq:stop_5}
				|{\cal L}(t)|=2,~|\ell^*(t) - j| = 1,~\forall~j\in {\cal L}(t), j\neq \ell^*(t),
	\end{align}
	and they have comparable statistics $T_\ell(t)$'s:
	\begin{equation}\label{Eq:stop_4}
	{\max_{\ell\in {\cal L}(t)}T_\ell(t)}/{\min_{\ell\in {\cal L}(t)}T_\ell(t)}<2
	\end{equation}
\item There is not enough training time to collect measurements for an additional iteration:
	\begin{equation}\label{Eq:stop_1}
		N(t)\doteq \sum_{l\in {\cal L}}n_\ell(t) > N_1 - {n_0}|{\cal L}(t)|,
	\end{equation}
	where $N_1$ is the maximum number of pilot symbols, available to Phase 1.
\end{enumerate}
Condition I in~\eqref{Eq:stop_2} means that all but one beam has been deactivated from further consideration. Condition II in \eqref{Eq:stop_5}--\eqref{Eq:stop_4} is an early stopping criterion which is proposed to deal with the possibility that a dominant path falls close to the boundary of two adjacent beams.\footnote{Note that Condition II requires both \eqref{Eq:stop_5} and \eqref{Eq:stop_4} to be met. If only \eqref{Eq:stop_5} is met, IDBS will not terminate.} Condition III in~\eqref{Eq:stop_1} means that the remaining time is not enough to scan the active beams for an additional iteration. Without the early stopping condition, i.e., Condition II, IDBS can take longer search time to reach Condition I or even use up all the available time, as we will show in Section~\ref{Sec:Toy_example}.

\subsection{Decision Rule}
The fact that IDBS terminates due to Condition II also provides an opportunity for IDBS to select a beam that is better than any of the two active beams, namely the beam centred at the boundary of the two adjacent beams. Motivated by this consideration, we propose the following decision rule for IDBS.
When IDBS terminates, a beam is selected for subsequent data transmission. The selected beam could either be one from the original codebook used for spatial scanning, or a shifted version of one of the original beams, according to the following decision rules:
\begin{enumerate}[A.]
\item If iterations are terminated when Condition I or Condition III is met, then choose beam $\ell^*(t)$ as the beam for subsequent use.
\item If iterations are terminated when Condition II is met, then shift beam $\ell^*(t)$ by half a beam width towards the other active beam $j$. For ULA, the shifted beamforming vector can be obtained by~\cite{Xiao2016}:
\begin{equation}\label{Eq:beam_switch}
\mathbf{u}'_{\ell^*(t)} = \mathbf{u}_{\ell^*(t)} \odot [1,e^{-i\frac{2\pi d}{\lambda}\delta_{\psi}},\ldots,e^{-i\frac{2\pi d}{\lambda}(N_R-1)\delta_{\psi}}]^T,
\end{equation}
where $\delta_{\psi} = \psi_{j} - \psi^c_{\ell^*(t)}$ and $\odot$ is the Hadamard product.
\end{enumerate}

We emphasise that the beam shifting operation due to Condition II requires no extra training time but provides an opportunity for IDBS to select a better beam than any of the original beams used for spatial scanning, i.e., an opportunity to even outperform the best beams from the original codebook given by~\eqref{Eq:genie_aid}. This benefit will be confirmed by numerical results in Sec.~\ref{Sec:Toy_example} and Sec.~\ref{Sec:Numerical}.

The IDBS algorithm is summarised in Table~\ref{table:algorithm}.  Note that IDBS in Table~\ref{table:algorithm} is applied first to Phase 1 and then to Phase 2. The same threshold $\alpha$ is adopted for the two phases. Suppose the maximum training time available for beam alignment, measured by the number of pilot symbols, is $N^{+}$. To apply IDBS to both phases, the maximum overhead of Phase 1 is set to $N_{1}<  N^+ - n_0S$ so that there is enough time for the Tx to scan its entire codebook, recalling that $S$ is the number of Tx candidate beams. The maximum overhead available to Phase~2 varies as it depends on how much resource is spent at Phase~1. However, it is at least $N^+-N_{1}$.

We finally note that IDBS requires one feedback each at the end of Phase 1 and Phase 2. As discussed earlier, these messages can be exchanged using low-frequency carriers, similarly to 5G NSA where low-frequency channels are used to transmit control signalling. The termination messages are not required to contain the beam index identified in each phase nor the indices of the deactivated beams. This is because the beam measurements and deactivation are performed by the Rx (e.g., UE in Phase 1) while the Tx is only required to keep transmitting the pilots (e.g., BS in Phase 1), as explained at the start of Section III. Therefore, the number of bits to feedback in each message can be as small as 1 bit. IDBS requires fast beam switching, due to the iterative beam examination where each beam is scanned using a short pilot sequence. However, this fast beam switching can be supported by state-of-the-art technologies. For instance, IBM has reported beam switching speeds of $< 4$ ns~\cite{IBM_beamswitch}.

\begin{table}[t]
	\begin{center} \caption{IDBS for Beam Alignment} \label{table:algorithm}
		\resizebox{0.8\textwidth}{!}{\begin{tabular}{l}
				\hline
				{\bf Input}: Beam codebook ${\cal L}$; Total overhead budget $N_1$; Threshold $\alpha$; Look-up table $(\tau_\alpha(y),y)$ \\
				{\bf Initialisation}: $t\leftarrow 0$; ${\cal L}(t) \leftarrow {\cal L}$, $N(t)\leftarrow 0$, $n_\ell(t) \leftarrow 0$, $\forall~\ell\in {\cal L}(t)$, flag $\leftarrow$ 0 \\
				{\bf While}  flag = 0\\
				
				\quad {\bf Step~1)}: collect one additional measurement for each of the beams \\
				\quad \quad~~~$~l\in {\cal L}(t)$, compute $r_\ell(t)$ and $T_\ell(t)$ as in \eqref{Eq:matched_filter_t} and \eqref{Eq:chi_square_t}\\
				\quad {\bf Step~2)}: \\
				\quad \quad 2.1) Identify ${\cal L}_E(t)$,the set of beams to deactivate, according to \eqref{Eq:eliminate_set_simplified_2}\\
				\quad \quad 2.2) Update the active set ${\cal L}(t) \leftarrow {\cal L}(t) \setminus {\cal L}_E(t)$\\
				\quad {\bf Step~3)}: $\ell^* = \arg \max_{\ell\in {\cal L}}|r_\ell(t)|$ and $\ell^*(t) =\arg \max_{\ell \in {\cal L}(t)} |r_\ell(t)|$.  \\
				\quad \quad \quad\quad\quad If $l^*\neq l^*(t)$: \\
				\quad \quad \quad\quad\quad\quad \quad 3.1) Collect extra measurements for beam $\ell^*$ such that $n_{\ell^*}(t) = n_0 t$ \\
				\quad \quad \quad\quad\quad\quad\quad 3.2) Update the active set as ${\cal L}(t) \leftarrow {\cal L}(t) \cup \{ \ell^*\}$\\
				\quad {\bf Step~4)}: Update $N(t)$ and Check Stopping Criteria\\
				\quad \quad \quad\quad ~~ I. If $N(t)+|{\cal L}(t)|> N_1$ \& $|{\cal L}_A(t)|>1$, flag $\leftarrow$ 1\\
				\quad \quad \quad\quad ~~ II. If $|{\cal L}_A(t)|=1$, flag $\leftarrow$ 2\\
				\quad \quad \quad\quad ~~ III. If \eqref{Eq:stop_5} and \eqref{Eq:stop_4} are satisfied, flag $\leftarrow$ 3\\
				{\bf Decision}: \\
				\quad\quad\quad\quad If flag = 1 or flag = 2, choose beam $\mathbf{u}_{\ell^*(t)}$\\
				\quad\quad\quad\quad If flag = 3, shift beam $\mathbf{u}_{\ell^*(t)}$ according to \eqref{Eq:beam_switch}
				\\\hline
		\end{tabular}}
	\end{center}
\end{table}

\section{Examples to Explain Condition II of the Stopping Criteria and Beam Shifting of the Decision Rule }\label{Sec:Toy_example}
In this section, we use two simple examples to explain the motivation of adopting Condition II of the stopping criteria and the beam shifting operation as presented in Section~\ref{Sec:algorithm}. In the two examples, we consider a single-path channel, a UE equipped with a single antenna and a BS with a ULA of $N_T = 32$ antennas.  In this setup, beam scanning only occurs at BS in Phase 2. The BS codebook consists of DFT vectors as given by~\eqref{eq:Tx_DFT}.
We also consider that the angular interval of BS is a $60^\circ$-sector, i.e., $\tilde{\phi} \in [-30^\circ,+30^\circ]$ or equivalently $\phi \in [-1/2,1/2]$, thus the codebook has $S=16$ beams. With this setup, $\phi_s^c= -\frac{1}{2} + \frac{(s-1/2)}{16}$, $s = 1,\ldots,16$, and the effective channel gain
\begin{equation}\label{Eq:single_path}
|h_s|^2 = |\mathbf{H}\mathbf{w}_s|^2 = |\gamma|^2|\mathbf{w}^{\dag}_s\mathbf{a}_T(\phi)|^2 \propto GT_s(\phi),
\end{equation}
where we recall that $GT_s(\phi)\doteq |\mathbf{w}^\dag_s\mathbf{a}_T(\phi)|^2 $ is the beamforming gain of $\mathbf{w}_s$ at angle $\phi$. Fig.~\ref{fig:two_examples} (a) and (b) plots $GT_s(\phi)$ with respect to beam index $s$, in Example 1 and Example 2, respectively. It can be seen that in Example 1, beam 9 has a much stronger channel than all other beams, as the channel path falls nearly in the centre of its intended coverage interval. In comparison, in Example 2, both beam 9 and 10 have much stronger channels than the other beams, as the channel path falls close to the boundary between the intended coverage intervals of beam 9 and 10. However, the beamforming gains of beam 9 and 10 in Example 2 are much weaker than that of beam 9 in Example 1.

The results to be presented in this section are averaged over $5\times 10^3$ trials with $\mathbf{H}$ fixed for Example 1 and Example 2 but noises generated randomly. The training budget is set to $N^+ = 1024$ symbols and $\alpha = 0.97$.

Fig.~\ref{fig:toy_overhead} is the average of the total number of pilot symbols when IDBS terminnates, with and without Condition II as specified in~\eqref{Eq:stop_5} and \eqref{Eq:stop_4}. The horizontal axis is the pre-beamforming SNR, which equals $\frac{P_T|\gamma|^2}{\sigma^2}$.
As expected, in Example 1, there is no discernible difference with and without Condition II of the stopping criterion. In Example~2, significantly lower overhead is observed across SNRs when Condition II is included in the stopping criteria.

Fig.~\ref{fig:one_dominant_beam} illustrates how training time is spent on each beam in Example 2. The vertical axis is the average number of pilot symbols spent on each beam, i.e., ${\mathbb E}\{n_s(t)\}$, at termination. It can be seen that for both the pre-beamforming SNR values presented, $-20$ dB for Fig.~\ref{fig:one_dominant_beam}(a) and $0$ dB for Fig.~\ref{fig:one_dominant_beam} (b), IDBS without Condition II spends much longer time attempting to make a decision between beam 9 and beam 10, which have comparable effective channel strengths.

\begin{figure}
	\centering
	\begin{minipage}{.48\linewidth}
		\centering
		\includegraphics[width=1\textwidth]{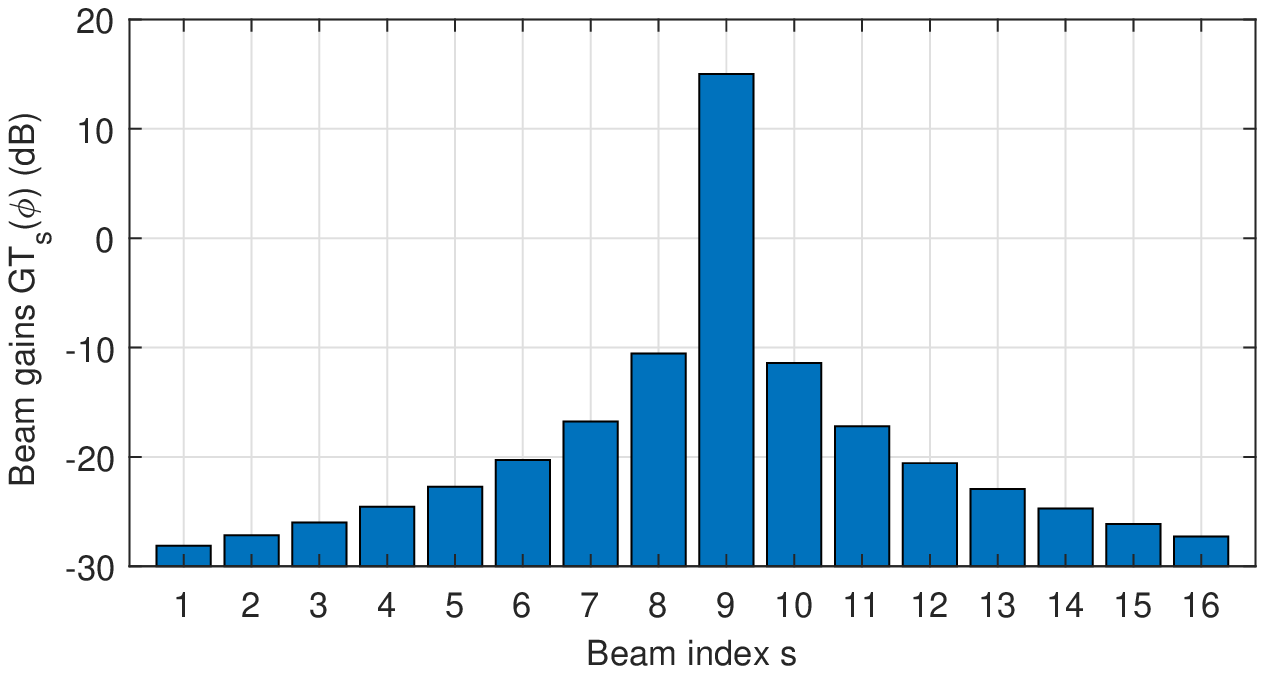}	\\
		(a)	
	\end{minipage}
	\begin{minipage}{.48\linewidth}
		\centering
		\includegraphics[width=1\textwidth]{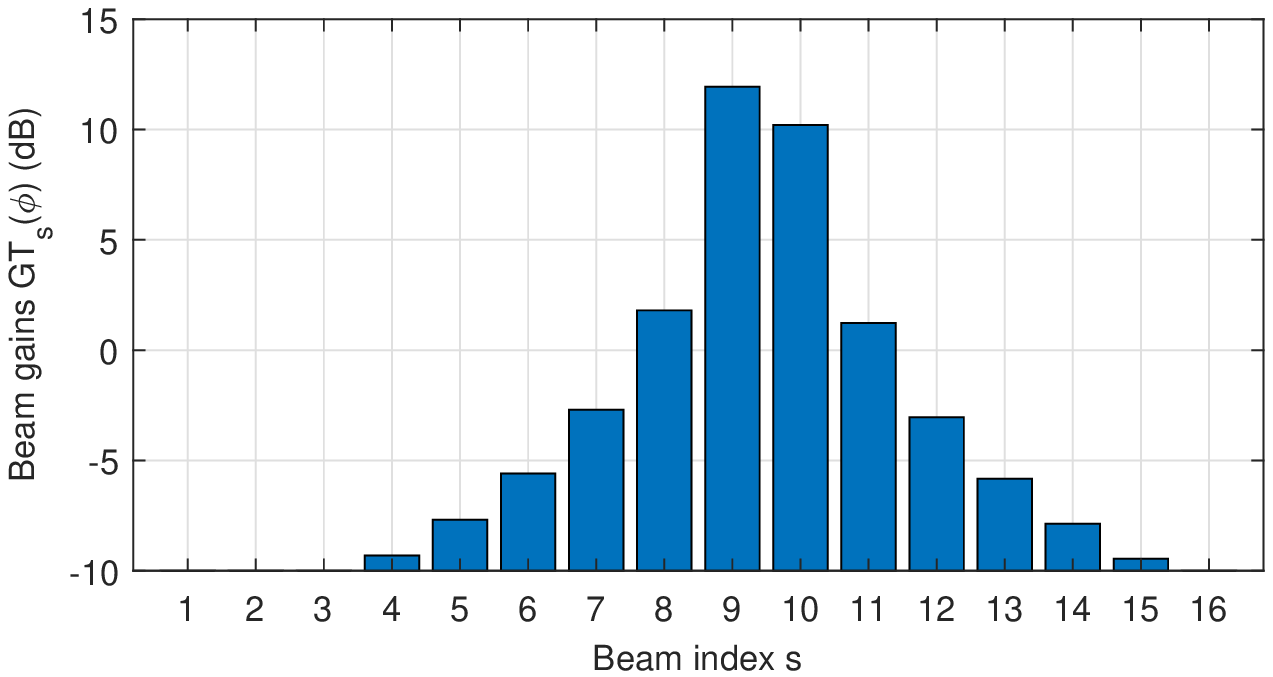}\\
		(b)
	\end{minipage}
	\caption{(a) Example 1: Only one dominant beam (beam 9). $\sin(\phi) = 0.0141$ (b) Example 2:   Two comparable beams (beam 9 and 10). $\sin(\phi) =0.0297$. The intended coverage of beam 9 is $[0,0.0313]$. The intended coverage of beam 10 is $[0.0313,0.0626]$.}\label{fig:two_examples}
\end{figure}

\begin{figure}
		\centering
		\includegraphics[width=0.55\textwidth]{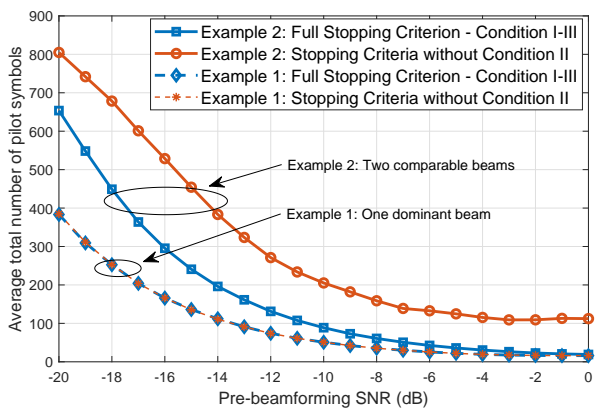}
	\caption{Average total number of pilot symbols when different stopping criteria are adopted. }\label{fig:toy_overhead}
\end{figure}

\begin{figure}[t]
	\centering
	\begin{minipage}{.48\linewidth}
		\centering
		\includegraphics[width=1\textwidth]{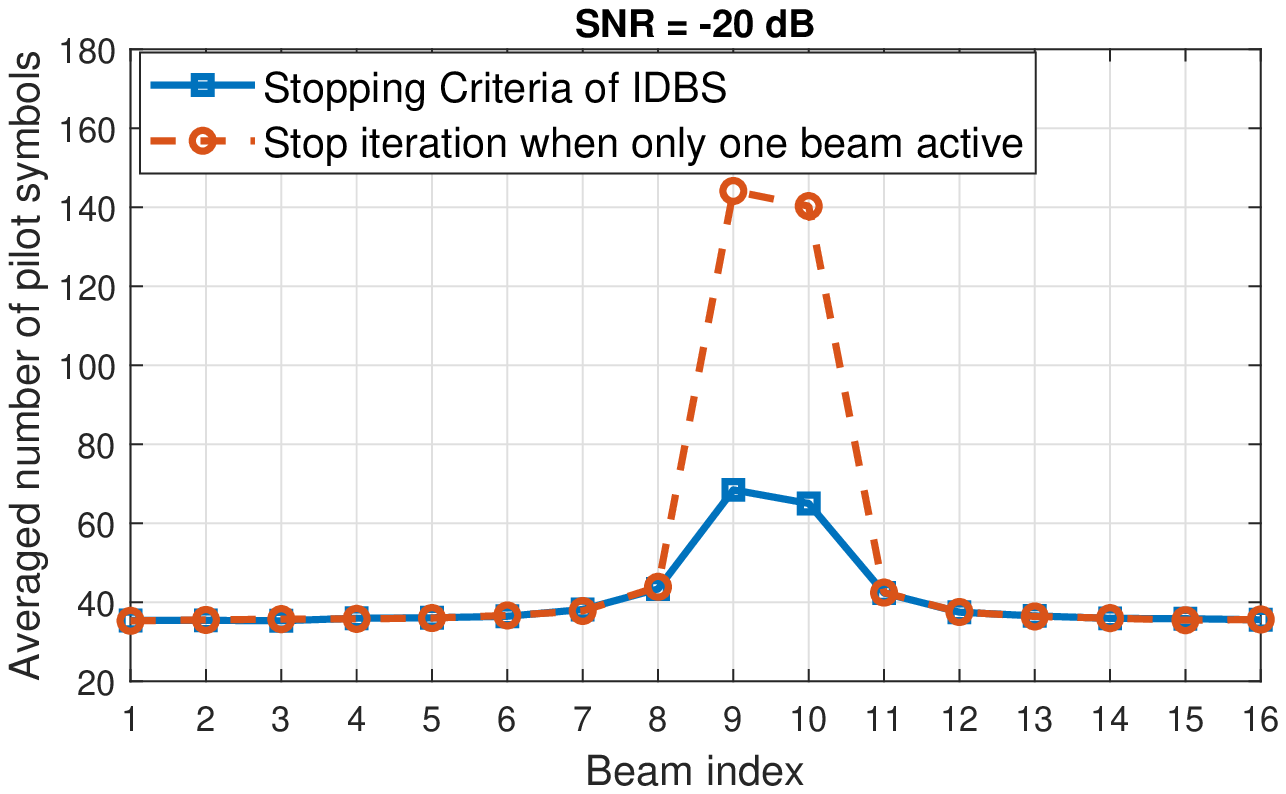}	\\
		(a)	
	\end{minipage}
	\begin{minipage}{.48\linewidth}
		\centering
		\includegraphics[width=1\textwidth]{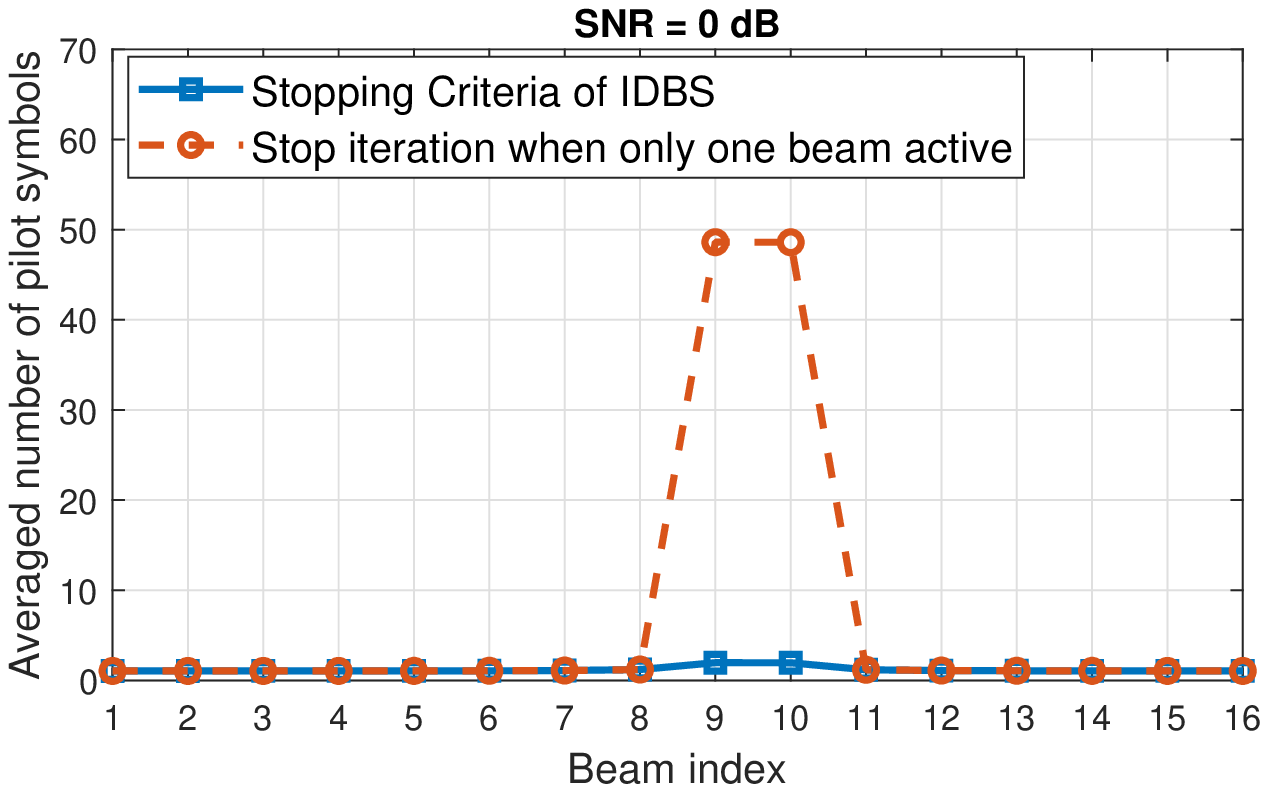}\\
		(b)
	\end{minipage}
	\caption{Example 2 - two comparable beams. Average number of pilot symbols when different stopping criteria are adopted: (a) SNR = -20 dB; (b) SNR = 0 dB}\label{fig:one_dominant_beam}
\end{figure}

Fig.~\ref{fig:toy_overhead} and Fig.~\ref{fig:one_dominant_beam} show that Condition II indeed leads to significant reduction of training time when the dominant path falls near the boundary of two adjacent beams.  We next show that the reduction of training time, due to the adoption of Condition II, does not come at the price of sacrificing the performance of subsequent data transmission. In fact, the combined use of Condition II and the beam shifting rule can lead to noticeable performance improvement in scenarios like Example 2.

Fig.~\ref{Fig:rate_Example2} presents the average spectrum efficiency after beam search in Example 2. The curve labeled as ``IDBS"
is obtained when the full version IDBS as described in Section~\ref{Sec:algorithm} is adopted for beam selection. The curve labeled as ``IDBS: no beam shifting" assumes that IDBS is adopted with all the three stopping conditions (Condition I-III) but without the beam shifting given. In other words, it always chooses the active beam that has the strongest measurement for data transmission. The comparison between IDBS with and without beam shifting is to demonstrate the effectiveness of beam shifting. The curve labeled as ``IDBS: no Condition II no beam switching" is obtained assuming IDBS with only Condition I and Condition III as the stopping criteria and without beam shifting. The final curve labeled as ``Best beam from scanning codebook" corresponds to the rate calculated using~\eqref{eq:rate} below assuming the best beams given in~\eqref{Eq:genie_aid} are selected. For all these four cases, the spectrum efficiency is calculated assuming that the same transmission power is adopted by beam training and data transmission. Therefore,  the spectrum efficiency is given by:
\begin{equation}
R = \log_2\left(1+\frac{P_T\big |\hat{\mathbf{u}}^\dag \mathbf{H}\hat{\mathbf{w}}\big |^2}{\sigma^2}\right)
\label{eq:rate}
\end{equation}
where the UE beam $\hat{\mathbf{u}}$ equals one (since a single antenna is assumed at UE) and $\hat{\mathbf{w}}$ is the beam selected under each scheme considered.

\begin{figure}
	\centering
	\includegraphics[width=0.55\textwidth]{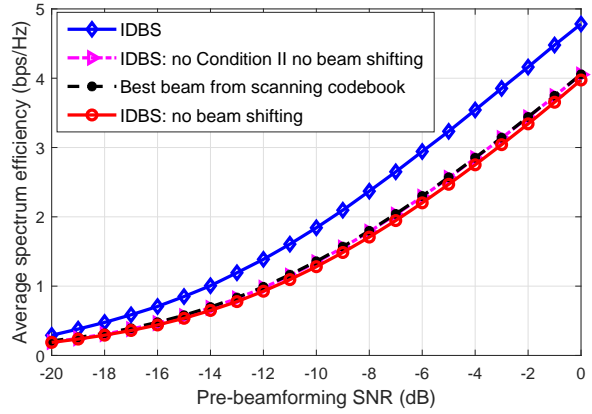}
	\caption{Example 2: Average spectrum efficiency after beam alignment, with or without beam switching.}
	\label{Fig:rate_Example2}
\end{figure}

It can be seen from Fig.~\ref{Fig:rate_Example2} that ``IDBS with no Condition II and no beam shifting" does have slightly higher average spectrum efficiencies than ``IDBS without beam shifting". This is because the former spends much more time on measuring beam 9 and beam 10, and thus can choose  the slightly better beam more reliably.  However, the performance differences are not significant across the SNR range and they come from significantly longer training time.

In can also be seen from Fig.~\ref{Fig:rate_Example2} that both ``IDBS with no Condition II and no beam shifting" and ``IDBS without beam shifting" have similar average spectrum efficiencies to that of the best beams from the original codebook, across the entire SNR range.  This means that IDBS can almost always find the best beam in this example across the entire SNR range. IDBS improves upon ``IDBS without beam shifting" and even outperforms the best beams from the original codebook. This is because the beam shifting operation in IDBS allows it to choose a beam that is outside the original codebook used for scanning and which offers a stronger beamforming gain.

To summarise, the examples in this section demonstrate that the introduction of Condition II can help reduce training overhead when the dominant path falls close to the boundary of two adjacent beams. The beam shifting operation performed after stopping on Condition II can provide better rate performance.\footnote{It is noted that in the rare occasion that two or more paths falling to the centres of adjacent beams, beam shifting may lead to worse rate performance. However, for IDBS to suffer from this problem, not only must this situation occur, but the paths must also have comparable path strengths such that neither of the beams that cover these two paths is deactivated. This makes it even rarer and thus is not considered in the design of IDBS.}

\section{Numerical Results}\label{Sec:Numerical}
In this section, we evaluate the performance of IDBS and investigate the impact of parameter choices on its performance.  Throughout this section, we suppose that ULAs are equipped at both BS and UE, and that the number of antennas at BS is $N_T = 64$ and the number of antennas at UE is $N_R = 16$. The angular interval of interest for BS is $[-30^{\circ},+30^{\circ}]$, i.e., $\phi \in [-1/2,1/2]$, and the angular interval of interest for UE is $[-90^{\circ}, 90^{\circ}]$, i.e., $\psi \in [-1,1]$. This setup corresponds to the initial beam alignment stage where a UE is synchronised to a BS, whose coverage is a $60^{\circ}$-sector. The wide beam $\mathbf{w}$ used by the BS to cover the $60^{\circ}$-sector in phase 1 is synthesised using the algorithm in~\cite{fan2018flat}. The narrow beams used by the BS and UE are DFT beams with inter-beam distances $2/N_T$, $2/N_R$ and peak gains $N_T$ and $N_R$, respectively. Because the BS covers interval $\phi \in [-1/2,1/2]$, there are $S=32$ narrow beams in the BS codebook. The UE codebook has $L=N_R=16$ beams because the UE angular interval of interest is $\psi \in [-1,1]$. In what follows, we assume the same transmit power in beam training and data transmission (when evaluating the spectrum efficiency). We also consider the same noise variance $\sigma^2$ at BS and UE and $n_0=1$.

Following~\cite{liu2017Jsac}, we consider two scenarios, namely line-of-sight (LOS) and non-line-of-sight (NLOS) scenarios. In the LOS scenarios, a Rician channel model is considered where there is one dominant path from angle $\tilde{\phi}$ and $\tilde{\psi}$ with respect to the BS and UE. The Rician ${\mathcal K}$-factor (i.e., the ratio of the power of the dominant path to the sum of the power of the scattering components) is set to $13.2$ dB~\cite{muhi2010modelling} and both  $\tilde{\phi} \in [-30^\circ,+30^\circ]$ and $\tilde{\psi} \in [-90^\circ,90^\circ]$ are uniformly distributed.
In the NLOS scenairo, the channel is modelled as the sum of $I$ paths, each with $\tilde{\phi}_i$ and $\tilde{\psi}_i$ uniformly drawn from $[-30^\circ,+30^\circ]$ and $[-90^\circ,90^\circ]$, respectively. Each path is again assumed to be Rician, with ${\mathcal K}$-factor set to $6$ dB~\cite{samimi201628}. The number of paths is $I = \max\{1,\zeta\}$, where $\zeta$ is a Poisson random variable with mean $1.8$ and the power fractions of the $I$ paths are generated by the method in~\cite{Akdeniz2014}. In what follows, the results presented are obtained by averaging over $2\times 10^4$ random channel realisations. The average pre-beamforming SNR is defined as the ratio between the average sum power from all paths and the average noise power at the receiver.

\subsection{Impact of the threshold $\alpha$}
We first investigate the impact of the threshold $\alpha$ on the training overhead of IDBS and the spectrum efficiency. Two maximum overhead levels $N^+ =\{3072,1024\}$ are considered. In either case, the maximum budget for Phase 1 is set to $N_1 = N^+ - 32\times 4$ to ensure that the beam searching time left for Phase 2 is not too short.

Fig.~\ref{fig::LOS_avg}(a) and (b) respectively present the average overhead  and the average spectrum efficiency in the LOS case. The results for the NLOS case are not presented for brevity, as the insights are the same as in the LOS case.

\begin{figure*}[t]
	\centering
	\includegraphics[width=0.99\textwidth]{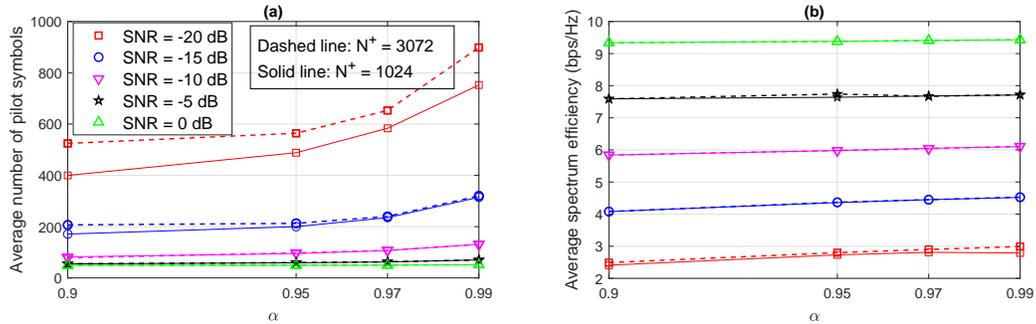}
	\caption{Impact of $\alpha$ on: (a) Average number of pilot symbols; (b) Average spectrum efficiency. Dashed lines: $N^+ = 3072$; Solid lines: $N^+ = 1024$.}
	\label{fig::LOS_avg}
\end{figure*}

As can be seen from Fig.~\ref{fig::LOS_avg}(a), the average overhead is lower when the average pre-beamforming SNR is higher, at all $\alpha$ values. As expected, a higher $\alpha$ increases the average overhead at both $N^+$ values. This overhead increase is more significant at lower SNRs. For instance, when $N^+ = 3072$, increasing $\alpha$ from $0.90$ to $0.99$ increases the average overhead by $(898.6/524.5 - 1) = 71\%$ when the SNR is -20 dB, while only by $(51.1/49.1-1) = 4\%$ when the SNR is 0 dB.

A higher $\alpha$ also tends to provide better spectrum efficiency, as one would expect and also shown in Fig.~\ref{fig::LOS_avg}(b). Similar to the observations on the overhead, the improvement in spectrum efficiency is significant only at low SNRs. For higher SNRs such as 0 dB, the performance does not depend strongly on $\alpha$ as beam alignment decisions are made within a few iterations.

Note that as shown in Fig.~\ref{fig::LOS_avg}(b), when SNR = -20 dB and $N^+=1024$, increasing from $\alpha = 0.97$ to $\alpha = 0.99$ slightly reduces the average spectrum efficiency. This is because with $\alpha = 0.99$, it becomes harder for IDBS to deactivate beams, thus it increases the chance that all the available training time is used up, i.e., IDBS terminates due to Condition III in~\eqref{Eq:stop_1}. Table~\ref{table:LOS_frac_overhead} presents the fraction of time that IDBS terminates due to Condition III in~\eqref{Eq:stop_1} in either Phase 1 or Phase 2, for SNR = -20 dB. It can be seen that with $\alpha = 0.99$, there is a $33.5\%$ of chance that it is running out of training time. For $\alpha = 0.97$, this fraction is much lower (14.5\%). For the higher maximum overhead $N^+=3072$, this fraction remains small even when $\alpha = 0.99$, thus the average spectrum efficiency keeps improving by increasing $\alpha = 0.97$ to $\alpha=0.99$.

\begin{table}[t]
	\caption{LOS - Fraction of time the maximum budget is used: values outside parentheses - $N^+ = 3072$; values inside parentheses - $N^+ =  1024$ \label{table:LOS_frac_overhead}}
	\centering
	\small{
		\begin{tabular}{|c|c|c|c|c|}
			\hline
		$\alpha$	& 0.90 & 0.95 & 0.97 & 0.99 \\
			\hline
		-20 dB & 4.8\% (11.1\%) & 1.9\% (9.5\%) & 1.5\% (14.5\%) & 1.4\% (33.5\%) \\
		\hline
		\end{tabular}
	}
\end{table}

\subsection{Impact of Inactive Beam Restoration}\label{Sec:restoration_results}
In this subsection, we compare the performance of IDBS to that of an incomplete version of IDBS, i.e., IDBS without the Inactive Beam Restoration step, to understand the impact of the restoration step. Fig.~\ref{Fig:LOS_Max} presents the overhead performance and the average spectrum efficiency of IDBS (with restoration) and IDBS without restoration for $\alpha = 0.90$ and $\alpha = 0.95$ in the LOS case. The maximum overhead is set to $N^+=1024$.

For $\alpha = 0.90$, it can be seen from Fig.~\ref{Fig:LOS_Max} (a) that Inactive Beam Restoration reduces the average overhead of IDBS. The overhead reduction is more significant at low SNRs, as the chance that the best beam being deactivated is higher in these scenarios (see the analysis in Appendix~\ref{sec_accept}). Fig.~\ref{Fig:LOS_Max}~(b) shows that with $\alpha = 0.90$, the adoption of Inactive Beam Restoration also increases the average spectrum efficiency. Both key aspects of performance are seen to improve.

As also shown by Fig.~\ref{Fig:LOS_Max} (a) and (b), with $\alpha = 0.95$, the performance improvement by using Inactive Beam Restoration becomes less significant. This is because the algorithm is less likely to deactivate the best beam when $\alpha$ is high, as explained in Appendix~\ref{sec_accept}, thus restoration is triggered much less frequently.

\begin{figure}
	\centering
	\begin{minipage}{.48\linewidth}
		\centering
		\includegraphics[width=1\textwidth]{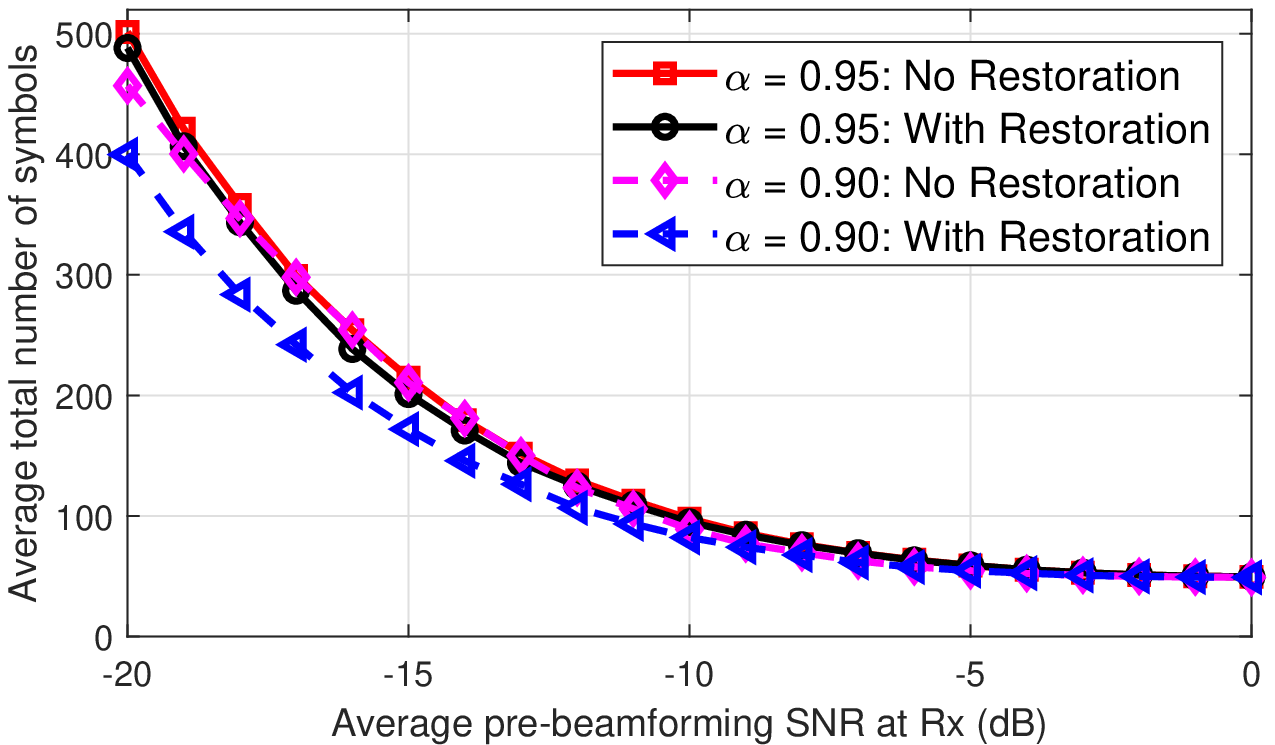}	\\
		(a)	
	\end{minipage}
	\begin{minipage}{.48\linewidth}
		\centering
		\includegraphics[width=1\textwidth]{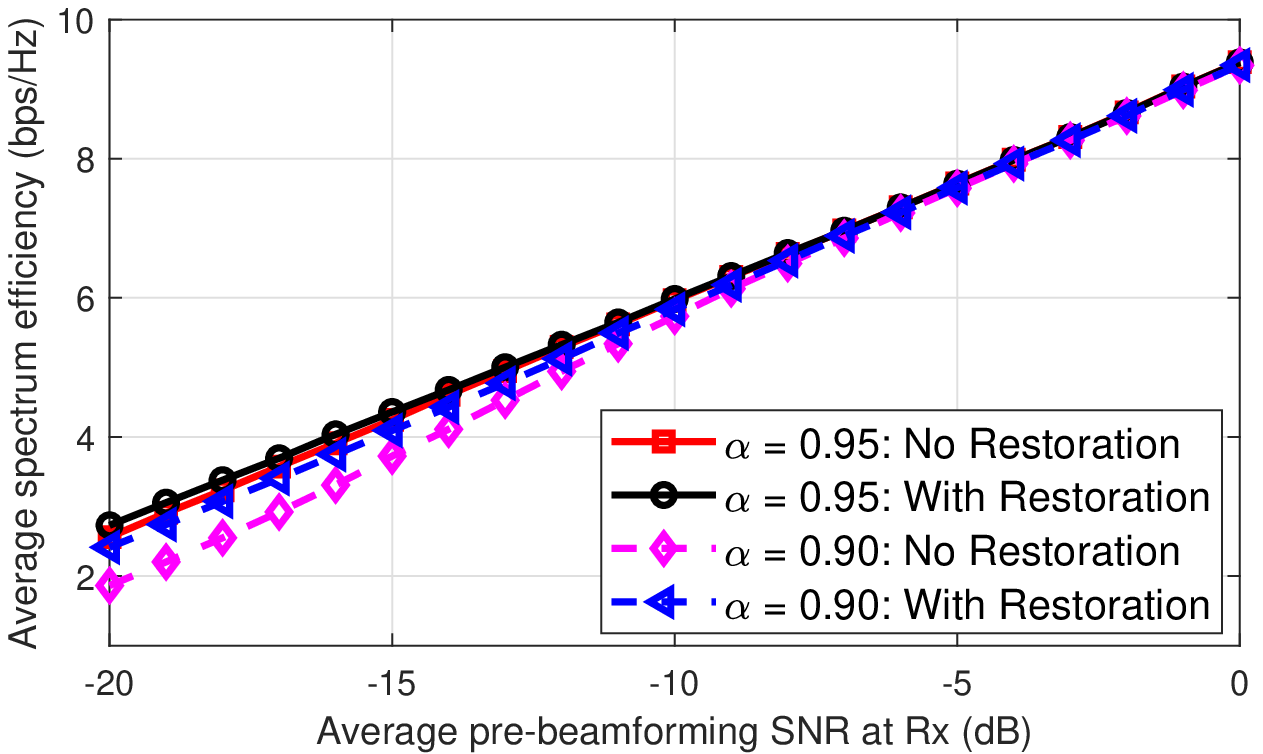}\\
		(b)
	\end{minipage}	
	\caption{Impact of the Inactive Beam Restoration on: (a) Average number of pilot symbols (overhead); (b) Average spectrum efficiency for $\alpha = 0.90$ and $\alpha = 0.95$, in LOS scenario.}\label{Fig:LOS_Max}
\end{figure}

\subsection{Performance Evaluation of IDBS}
In this subsection, we evaluate the average spectrum efficiency and the average overhead of IDBS in different SNRs and fading scenarios to demonstrate its capability of adapting to the unknown scenarios to achieve good tradeoffs between the two metrics. For illustration, we consider $N^+=1024$.

Fig.~\ref{Fig:meanRate_overhead_LOS}  and Fig.~\ref{Fig:meanRate_overhead_NLOS} plots the average spectrum efficiency versus the average overhead of IDBS in LOS and NLOS, respectively. Two pre-beamforming SNRs are considered: $-15$~dB and $-5$~dB. The four red points for IDBS in each plot correspond to $\alpha = [0.90,0.95,0.97,0.99]$ from left to right.  As benchmarks to the spectrum efficiency, we have added ``best beam pair of scanning codebook" (magenta dashed curve) and ``infinite-resolution beamforming" (black solid curve) in the figures. The ``best beam pair of scanning codebook" is obtained by searching over the original codebooks assuming the knowledge of channel, as given by~\eqref{Eq:genie_aid}. The ``infinite-resolution beamforming" benchmark is obtained by searching over the entire angular space at both the UE and the BS, i.e., $\psi \in [-1,1]$ and $\phi\in [-1/2,1/2]$:
\begin{align}\label{Eq:genie_aid_2}
&\mathbf{u}^* = \arg\max_{\psi \in [-1,1]}|\mathbf{u}^{\dag}(\psi) \mathbf{H}\mathbf{w}|,\nonumber\\
&\mathbf{w}^* = \arg\max_{\phi\in [-1/2,1/2]}|\mathbf{u}^{*\dag} \mathbf{H}\mathbf{w}(\phi)|,
\end{align}
where $\mathbf{u}(\psi)$ and $\mathbf{w}(\phi)$ are the UE and BS DFT beams centred at $\psi$ and $\phi$, respectively. Because the search of ``infinite-resolution beamforming" is in the continuous space as opposed to ``best beam pair of scanning codebook" that is over a set of discretised angles, the former has higher spectrum efficiencies than the latter, and is also an upper bound to the spectrum efficiency that can be achieved by directional analog beamforming.

\begin{figure}
	\begin{minipage}{.49\linewidth}
		\centering
		\includegraphics[width=1\textwidth]{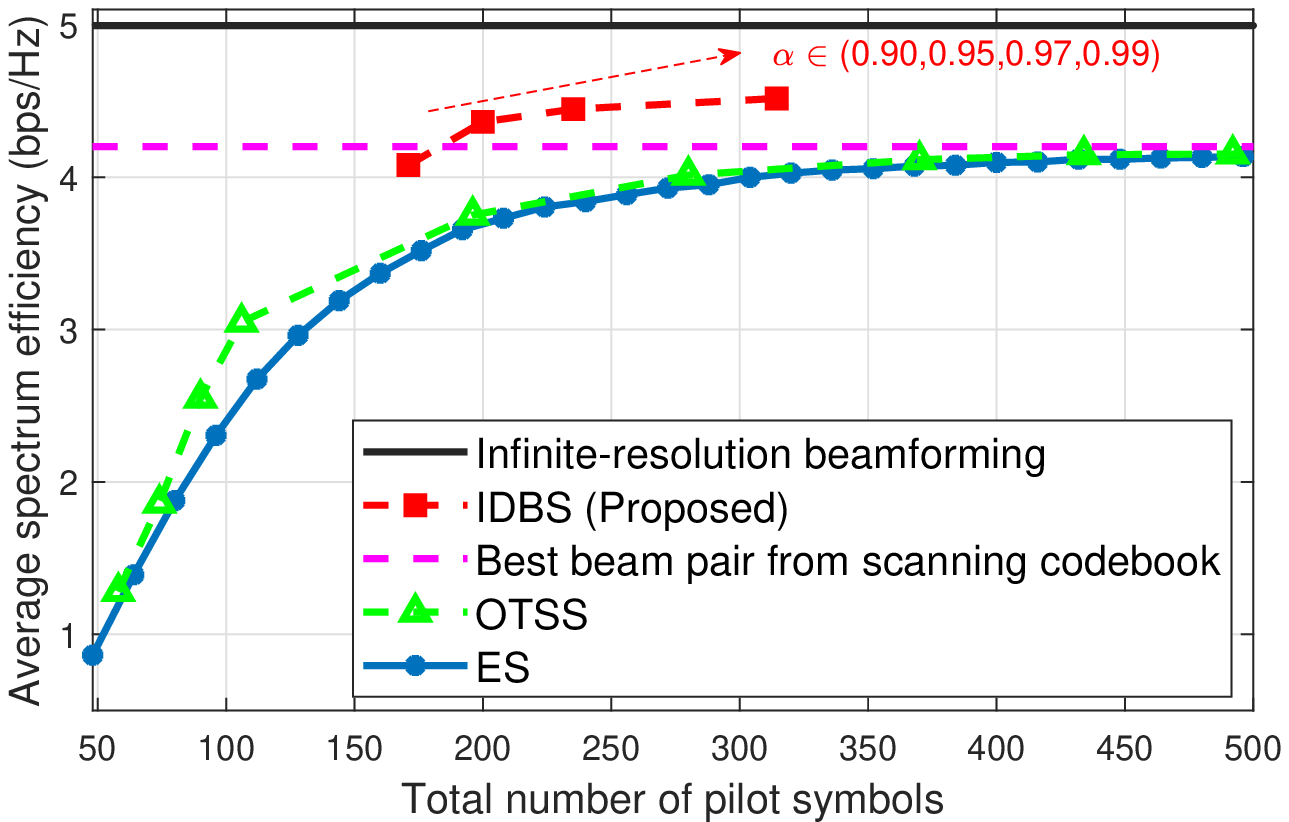}\\
		(a)
		\\
		\includegraphics[width=1\textwidth]{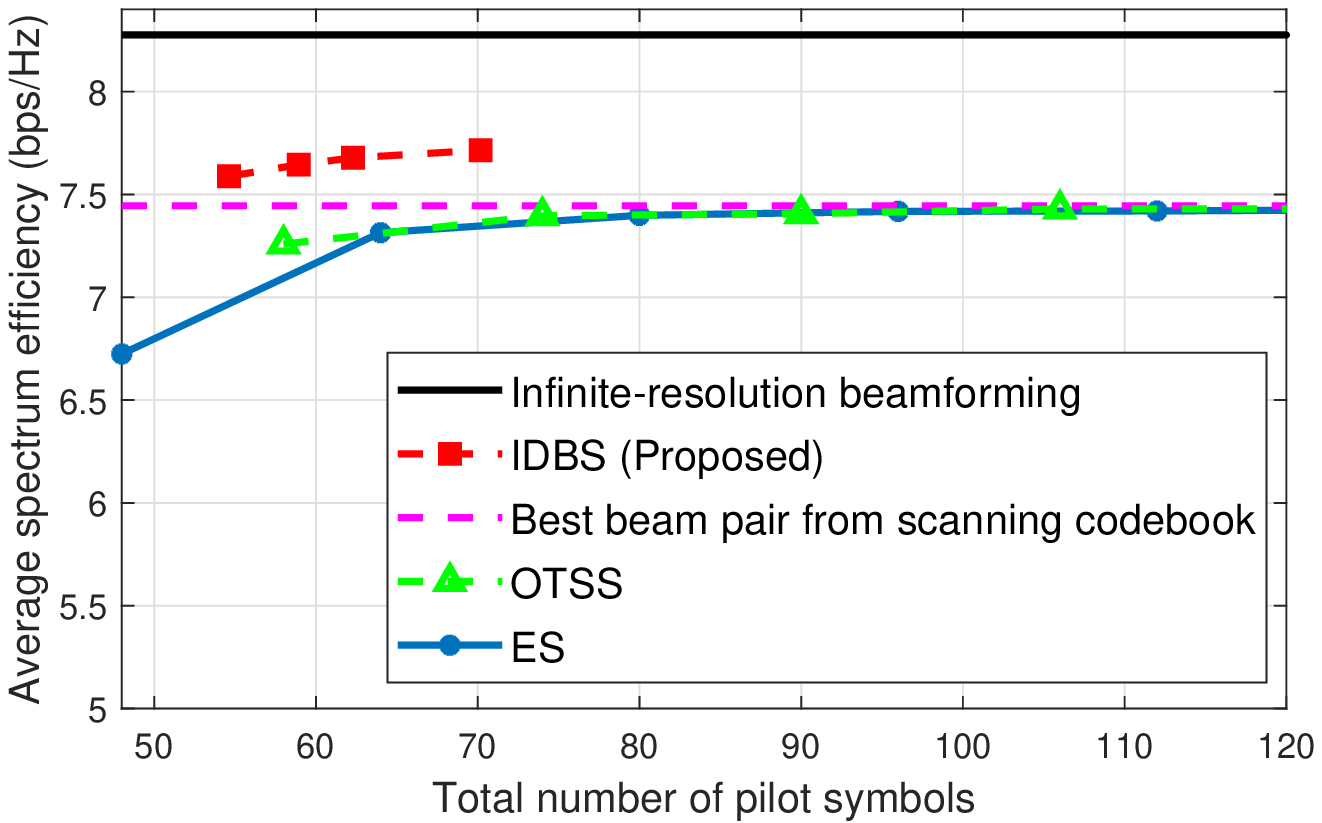}\\
		(b)
		\caption{LOS: Beam search performance tradeoff - average spectrum efficiency vs. overhead. (a) pre-beamforming SNR = -15 dB; (b) pre-beamforming SNR = -5 dB. The four red points of IDBS represent four values of $\alpha$, i.e., $\{0.90,0.95,0.97,0.99\}$ from left to right.  }
		\label{Fig:meanRate_overhead_LOS}
	\end{minipage}
	\hspace{1em}
	\begin{minipage}{.49\linewidth}
		\centering
		\includegraphics[width=1\textwidth]{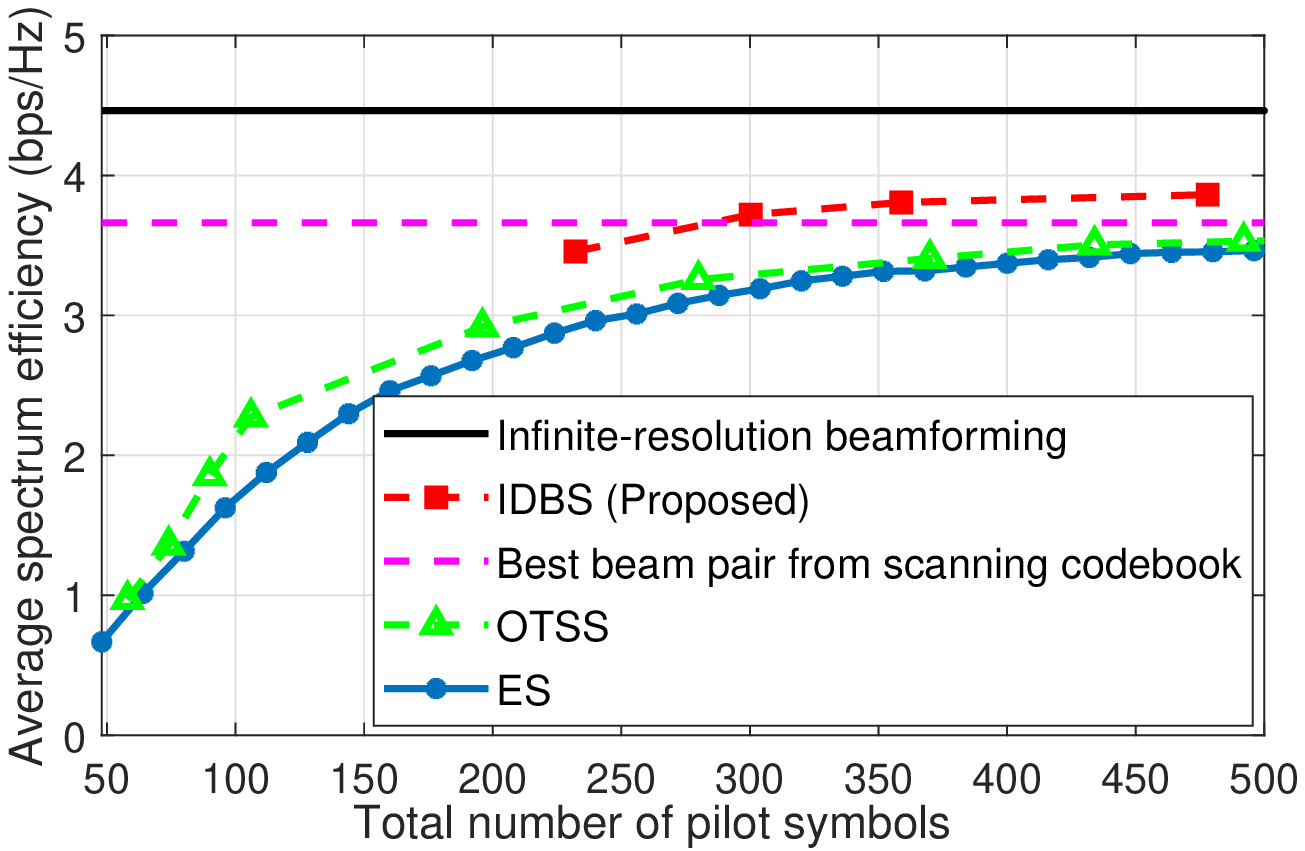}\\
		(a)\\
		\includegraphics[width=1\textwidth]{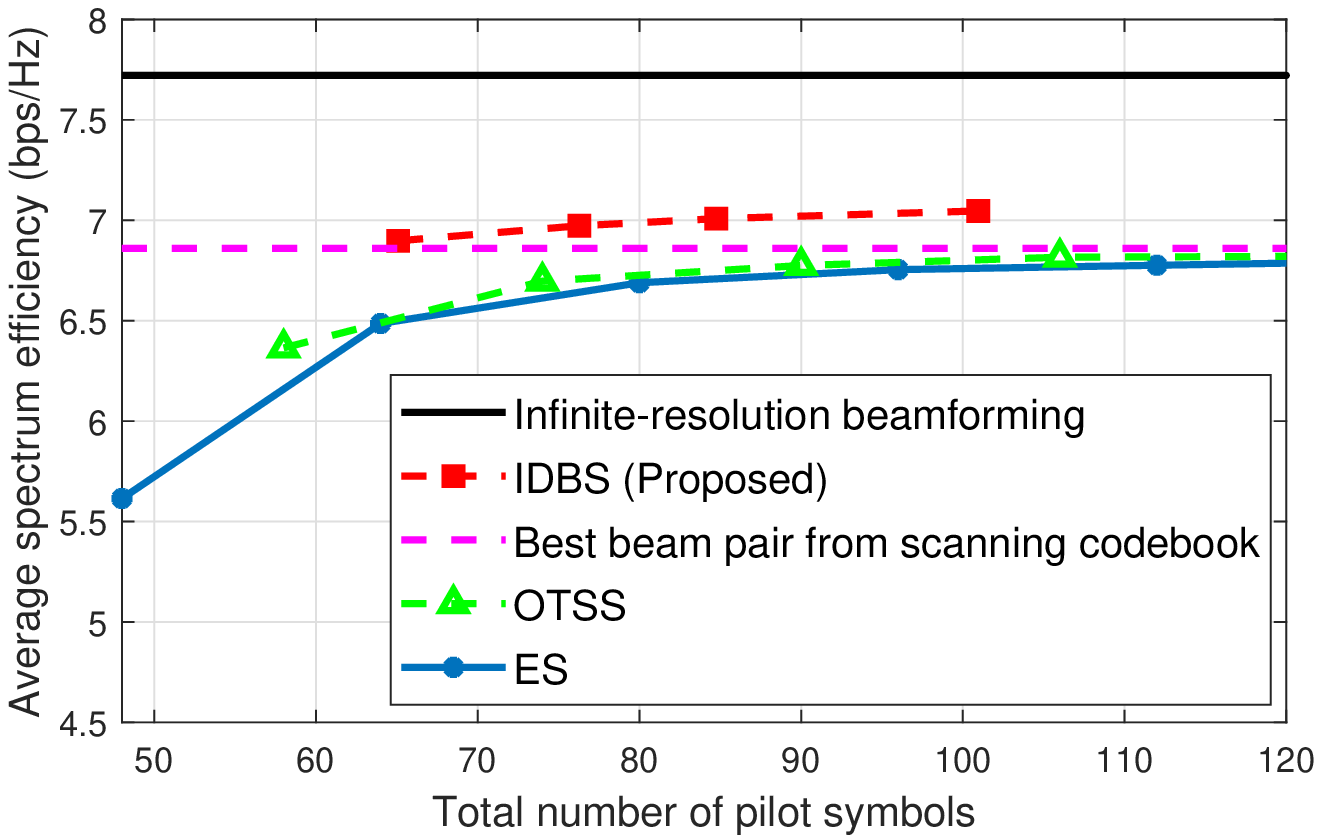}\\
		(b)\\
		\caption{NLOS: Beam search performance tradeoff - average spectrum efficiency vs. overhead. (a) pre-beamforming SNR = -15 dB; (b) pre-beamforming SNR = -5 dB. The four red points of IDBS represent four values of $\alpha$, i.e., $\{0.90,0.95,0.97,0.99\}$ from left to right.  }
		\label{Fig:meanRate_overhead_NLOS}
	\end{minipage}
\end{figure}

For comparison, we have also included the exhaustive search (ES) and the OTSS algorithm~\cite{min2019TWC} in Fig.~\ref{Fig:meanRate_overhead_LOS}  and Fig.~\ref{Fig:meanRate_overhead_NLOS}. Similar to IDBS, both the ES and OTSS are applied firstly to Phase 1 and then to Phase 2. (For ES, the beam search is performed in $S+L$ steps.) It is noted that for ES and OTSS, each point presented is obtained from the best time split between Phase 1 and Phase 2 for the corresponding overhead. This optimal time split varies from LOS to NLOS and from one SNR to another, and is obtained using a brute-force search over all feasible time splits assuming that the SNR and the LOS/NLOS fading statistics are known. We emphasise that the results for ES and OTSS are \emph{not achievable} in practice because the SNR and the LOS/NLOS labels are unknown in initial beam alignment (thus the optimal time split between phase 1 and phase 2 cannot be known either).

As can be seen from Fig.~\ref{Fig:meanRate_overhead_LOS}  and Fig.~\ref{Fig:meanRate_overhead_NLOS}, in the four (LOS/NLOS, SNR) scenarios presented, IDBS is able to use a suitable amount of training overhead to achieve good spectrum efficiency, by controlling a single parameter $\alpha$. For instance, in LOS, when $\alpha = 0.97$, IDBS uses an average of 235 and 62 symbols to achieve $4.4/5.0= 88\%$ and $7.7/8.2= 93.9\%$ to that of the optimal ``infinite-resolution beamforming" when SNR is $-15$ dB and $-5$ dB. Similar figures are seen also in NLOS scenarios.

Such good balances between overhead and spectrum efficiency are not achievable by ES and OTSS with a universally fixed overhead. To see this, let us consider first a fixed overhead at $300$ for ES. In LOS and SNR = -15 dB, see Fig.~\ref{Fig:meanRate_overhead_LOS}~(a), this overhead allows the ES to achieve $4.0/5.0 \approx 80\%$ of the ``infinite-resolution beamforming". However, the overhead at 300 is unnecessarily large for SNR = -5 dB, as there is hardly any improvement of spectrum efficiency by increasing the overhead beyond 100. For SNR = -5 dB and LOS, a more suitable overhead level appears around $60$, with which ES approaches ``best beam pair of scanning codebook" and achieves $7.3/8.2=89\%$ of the optimal ``infinite-resolution beamforming". But again, this overhead around 60 will be a rather poor choice for SNR = -15 dB, as the spectrum efficiency achieved will be very poor, i.e., only $1.4/5.0= 28\%$ of the ``infinite-resolution beamforming".  It can also be seen that the amount of overhead required by ES (and OTSS) differs from LOS to NLOS, with longer time required in NLOS. This fact further demonstrates the drawback of the non-adaptive approach with a universally fixed overhead.

We remark that IDBS can achieve higher spectrum efficiencies than the best beam pair from the scanning codebook. This is because IDBS, with beam shifting, is allowed to select a beam that is outside the original scanning codebook, providing it an opportunity to select a better beam. The benefit of beam shifting will be smaller if oversampled codebooks, with more beams placed closer to achieve higher spatial resolution, are used in spatial scanning. However, using oversampled codebooks will significantly increase the training overhead, as there are more beams to examine. A more effective way to gain higher spatial resolution is to refine the beams after the initial alignment, using beam refinement algorithms such as~\cite{zhu2018high}, instead of using an extremely fine resolution codebook for initial scanning.

We also remark that there are other alternative performance metrics to the average spectrum efficiency. For instance, an effective spectrum efficiency that takes account for the overhead spent in beam search: $R' = R\times (1-N_{\text{training}}/N_{\text{max}})$, may be used, where $N_{\text{max}}$ is the total number of symbols in a coherent block and $N_{\text{training}}$ is the number of symbols used for beam search. Another valid metric is the outage probability that captures the fraction of time that the achieved spectrum efficiency is lower than one of the benchmarks by a certain amount. Using these metrics do not change the overall insights drawn above, we thus omit them due to the page limit.

We finally note that the overhead of IDBS varies even for the same SNR due to random channel realisations and noise. The overhead variations tend to be more significant when the pre-beamforming SNR is lower. To see this, we present in Table~\ref{table:overhead_extreme} the 90-th percentile of the overhead, along with the average overhead, where the 90-th percentile is used to illustrate the overhead consumed by IDBS in ``worst-case" realisations. As can be seen from Table~\ref{table:overhead_extreme}, for SNR at $-5$ dB, LOS and $\alpha=0.95$, the 90-th percentile is 71 which is only $71-58.9=12.1$ larger than the average or $12.1/58.9=20.5\%$ larger in relative terms. The overhead consumed is relatively concentrated around the average. For SNR of $-15$ dB, LOS and $\alpha = 0.95$, the 90-th percentile of the overhead is $314$, which is $314-200=114$ larger than the average or $114/200 = 57\%$ larger in relative terms.

\begin{table}[t]
	\caption{IDBS overhead - values outside parentheses: 90-percentile of the overhead upon termination. Values inside parentheses:average overhead upon termination. Overhead is measured by the total number of pilots. $N^+=1024$. \label{table:overhead_extreme}}
	\centering
	\small{
		\begin{tabular}{|c||c|c|c|c|c|}
			\hline
			& 	& $\alpha=0.90$ & $\alpha=0.95$ & $\alpha=0.97$ & $\alpha=0.99$ \\
			\hline
			LOS & -15 dB & 263 (171.0) & 314 (200.0)  &  379 (235.1) & 507 (314.4) \\
			\hline
			LOS & -5 dB & 64 (54.6) &  71 (58.9) &  78 (62.3) &  91 (70.2) \\
			\hline
			\hline
			NLOS & -15 dB & 461 (232.2) & 606 (300.3)  &  778 (358.9) & 950 (478.0) \\
			\hline
			NLOS & -5 dB & 81 (65.1) &  105 (76.3) &  127 (84.8) &  166 (101.0) \\
			\hline
		\end{tabular}
	}
\end{table}

\section{Conclusions}\label{Sec:Conclusions}
In this paper, we presented a new  algorithm for mmWave beam search called Iterative Deactivation and Beam Shifting (IDBS). IDBS does not require any prior information such as SNR references and channel fading statistics in order to operate. It automatically adapts its overhead to the unknown SNR and fading statistics to obtain beam alignment performance close to that achieved by the best beams from the codebooks used for beam search, but with minimal overhead. Simulations over LOS and NLOS fading channel models extracted from NYU's measurements~\cite{Akdeniz2014} have confirmed that IDBS can achieve good balances between training overhead and beam search accuracy in different SNRs and in different fading scenarios. This makes the algorithm attractive for outdoor mmWave cellular mobile communications where the SNR and channel fading statistics vary significantly.

\color{black}
\appendices

\section{Proof of Theorem~\ref{theorem_monoto}}\label{Proof_theorem1}
\begin{align}
f(T_\ell(t),T_j(t)) &  = \int_{\eta_\ell>\eta_j}g_{T_\ell(t)}(\eta_\ell)g_{T_j(t)}(\eta_j)d\eta_\ell d\eta_j = \int^{+\infty}_{\eta}g_{T_\ell(t)}(\eta_\ell) \big [1-Q_{1}(\sqrt{T_j(t)},\sqrt{\eta})\big ]d\eta_\ell,
\end{align}
where $Q_{1}(a,b)$ is the Marcum Q-function with DoF = 1. Since $Q_{1}(a,b)$ is monotonically increasing with respect to $a$~\cite{5429099}, it follows that $f(T_\ell(t),T_j(t))$ is monotonically decreasing with respect to $T_j(t)$.

Since  $\int_{\eta_\ell, \eta_j}g_{T_\ell(t)}(\eta_\ell)g_{T_j(t)}(\eta_j)d\eta_\ell d\eta_j =1,$ it follows that
\begin{align}
f(T_\ell(t),T_j(t)) &= 1 -  \int_{+\infty\geq\eta_j>\eta_\ell\geq 0}g_{T_\ell(t)}(\eta_\ell)g_{T_j(t)}(\eta_j)d\eta_\ell d\eta_j \nonumber \\
&= 1 - \int^{+\infty}_{\eta}g_{T_j(t)}(\eta_j) \big [1-Q_{1}(\sqrt{T_\ell(t)},\sqrt{\eta})\big ]d\eta_j.
\end{align}
Therefore, $f(T_\ell(t),T_j(t)) $ is monotonically increasing with respect to $T_\ell(t)$.

\section{Proof of Lemma~\ref{Lemma_infinite_sum}}\label{Appendix_proof_infinite_sum}
Following \eqref{Eq:finite_prob}, $f(x,y)$ can be represented as:
\begin{align}
f(x,y)        & = \int_{0}^{+\infty}g_{x}(\eta_1)\int_{0}^{\eta_1}g_{y}(\eta_2)d\eta_2 d\eta_1  =  1 - \int^{+\infty}_{0}g_{x}(\eta) Q_{1}\left(\sqrt{y},\sqrt{\eta}\right)d\eta, \label{eq:lemma3_2}
\end{align}
where we have used the fact that $\int_{0}^xg_{y}(\eta)d\eta = 1-Q_1(\sqrt{x},\sqrt{\eta})$ to obtain the second equation.

Substituting $g_{x}(\eta)$ of \eqref{eq:chi2_density} into \eqref{eq:lemma3_2}, it can be obtained that
\begin{align}
f(x,y) &= 1 - \int^{+\infty}_{0}\frac{1}{2}\exp(-\frac{x+\eta}{2})I_0(\sqrt{\eta x}) Q_{1}(\sqrt{y},\sqrt{\eta})d\eta \nonumber\\
& = 1 - \sum_{k=0}^{+\infty}\frac{\exp(-\frac{x}{2})x^k}{2\times4^k\times(k!)^2}\underbrace{\int^{+\infty}_{0}\eta^k\exp(-\frac{\eta}{2}) Q_{1}(\sqrt{y},\sqrt{\eta})d\eta}_{F(k,\eta,y)},\label{eq:lemma3_5}
\end{align}
where we have used the fact that the modified Bessel function of the first kind with zero-order can be represented as~\cite[Eq. 9.6.12 on Page 375]{abramowitz1965handbook}, $I_0(z) = \sum_{k=0}^{\infty}\frac{\left(\frac{1}{4}z^2\right)^k}{(k!)^2}$
to obtain \eqref{eq:lemma3_5}. To further compute $F(k,\eta,y)$ in \eqref{eq:lemma3_5}, we use the following fact~\cite[Eq. (15)]{cui2012two}:
\begin{align}
&R_{a,b}(M,p,1) = \frac{\Gamma(M)}{p^M}\times\left[ 1 - \frac{b^2}{b^2+2p}\exp\left(-\frac{a^2p}{b^2+2p}\right)\sum_{m=0}^{M-1}\left(\frac{2p}{b^2+2p}\right)^m L_m\left(-\frac{a^2p}{b^2+2p}\right)\right]\label{eq:lemma3_fromref}
\end{align}
where $R_{a,b}(k,p,v) \doteq \int_{0}^{+\infty}x^{k-1}\exp(-px)Q_v(a,b\sqrt{x})dx,$ and $\Gamma(M) = (M-1)!$ is the gamma function.
It is easy to see that $F(k,\eta,y) = R_{\sqrt{y},1}(k+1,\frac{1}{2},1)$, with $a = \sqrt{y}$, $b = 1$, $p = \frac{1}{2}$ and $M = k+1$ in \eqref{eq:lemma3_fromref}, thus $F(k,\eta,y)$ can be represented as:
\begin{align}
F(k,\eta,y)  & = \frac{\Gamma(k+1)}{(\frac{1}{2})^{k+1}}\left[ 1 - \frac{1}{2}\exp\left(-\frac{y}{4}\right)\sum_{m=0}^k\left(\frac{1}{2}\right)^mL_m\left(-\frac{y}{4}\right)\right ].\label{eq:lemma3_6}
\end{align}
Substituting \eqref{eq:lemma3_6} into \eqref{eq:lemma3_5}, it can be obtained that
\begin{align}
f(x,y) & = \frac{1}{2}\exp\left(-\frac{x}{2}-\frac{y}{4}\right) \underbrace{\sum_{k=0}^{+\infty} \frac{(\frac{x}{2})^k}{k!}\sum_{m=0}^{k}\frac{1}{2^m}L_m\left(-\frac{y}{4}\right)}_{G},\label{eq:lemma3_7}
\end{align}
where the $G$ term can be further computed as:
\begin{align}
G = & \exp\left(\frac{x}{2}\right)\sum_{m=0}^{+\infty}\left(\frac{1}{2}\right)^mL_m\left(-\frac{y}{4}\right)-\sum_{m=1}^{+\infty}\left(\frac{1}{2}\right)^mL_m\left(-\frac{y}{4}\right)\sum_{k=0}^{m-1}\frac{(\frac{x}{2})^k}{k!}, \\
=&  2\times\exp\left(\frac{x}{2} + \frac{ y}{4}\right) - \sum_{m=1}^{+\infty}\left(\frac{1}{2}\right)^mL_m\left(-\frac{y}{4}\right)\sum_{k=0}^{m-1}\frac{(\frac{x}{2})^k}{k!},\label{eq:lemma3_8}
\end{align}
since the Laguerre polynomial has the following property~\cite[Pg. 242]{magnus2013formulas}:
\begin{equation}
\sum_{m=0}^{+\infty}L_m(z)\omega^m = \frac{1}{1-\omega}\exp\left(\frac{\omega z}{\omega-1}\right), ~|\omega|<1.
\end{equation}
This completes the proof.

\section{Some analysis of Beam Deactivation \label{sec_accept}}
We now present some performance results of IDBS. As we aim to gain insights on the deactivation operation, we only consider one phase of search, e.g., Phase 1. Consider in this case the Tx omnidirectional transmit the pilot signals and the Rx scans $[0^\circ,360^\circ]$ using $M$ ideal beams, with the same beam widths, the same uniform gain with the intended coverage interval, and zero-leakage outside this interval~\cite{min2019TWC}.  Suppose also that the channel has a single path with some fixed angle $\phi$ at the Rx. Then there is only one true beam that has non-zero beamforming gain to the channel and this gain is $M$. Without loss of generality, we assume the true beam is beam~$1$. Consequently, $T_1(t)$ is a sequence of $\chi_2^2(\eta_1(t))$ random variables with
$\eta_1(t)  = {2 t P_T \abs{h_1}^2}/{\sigma^2}= {2 t P_T M |\gamma|^2}/{\sigma^2}$,
where $\gamma$ is the complex path coefficient (see \eqref{Eq:single_path}) and the pre-beamforming SNR is $ {P_T \abs{\gamma}^2}/{\sigma^2}$.  

\begin{lemma}
	Consider the event $E(t)$ that the true beam is first deactivated at the iteration $t$ under the algorithm. It holds that
	\begin{align}
	Pr\{E(t)\} & \leq Pr\{f(Y(t), T_1(t)) > \alpha \} = Pr\{ \tau_\alpha(Y(t)) > T_1(t)\} \doteq q(t),\label{eqn_RejectTrue}
	\end{align}
	where $Y(t)$ is the maximum of all the false beam measurements at iteration $t$ which are supposed taken, irrespective of whether they
	have been deactivated.
\end{lemma}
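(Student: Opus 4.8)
The plan is to establish the set inclusion $E(t) \subseteq \{f(Y(t), T_1(t)) > \alpha\}$, take probabilities, and then rewrite the resulting event in the critical-value form using the definition of $\tau_\alpha$. Throughout I would treat beam~$1$ (the unique beam with non-zero beamforming gain) as the true beam, so that every other beam is a false beam with non-centrality parameter zero.

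First I would appeal to Theorem~\ref{theorem_2}. On the event $E(t)$, beam~$1$ is active at iteration $t$ and lies in the deactivation set $\mathcal{L}_E(t)$. By the simplified characterisation~\eqref{Eq:eliminate_set_simplified}, this occurs exactly when $1 \in \mathcal{L}(t)\setminus\ell^*(t)$ and $f(T_{\ell^*(t)}(t), T_1(t)) > \alpha$, where $\ell^*(t) = \arg\max_{\ell \in \mathcal{L}(t)} T_\ell(t)$. The membership condition $1 \neq \ell^*(t)$ is built into this formula, and since beam~$1$ is the only true beam, $\ell^*(t)$ must be an \emph{active} false beam. Consequently, on $E(t)$ the first argument of $f$ equals the largest measurement among the active false beams.

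Next I would dominate this active maximum by $Y(t)$, the maximum over \emph{all} false-beam measurements supposed taken at iteration $t$. Since the active false beams are a subset of all false beams, and an active beam's measurement at iteration $t$ coincides with its ``supposed-taken'' value, we obtain $T_{\ell^*(t)}(t) \leq Y(t)$ pathwise. Invoking the monotonicity of Theorem~\ref{theorem_monoto}---that $f(x,y)$ is increasing in $x$---gives $f(T_{\ell^*(t)}(t), T_1(t)) \leq f(Y(t), T_1(t))$, so the deactivation condition forces $f(Y(t), T_1(t)) > \alpha$. This proves $E(t) \subseteq \{f(Y(t), T_1(t)) > \alpha\}$ and hence $\Pr\{E(t)\} \leq \Pr\{f(Y(t), T_1(t)) > \alpha\}$. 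For the final equality, the definition $f(x, \tau_\alpha(x)) = \alpha$ together with the fact that $f(x,y)$ is decreasing in $y$ (again Theorem~\ref{theorem_monoto}) shows $f(x,y) > \alpha \Leftrightarrow y < \tau_\alpha(x)$; setting $x = Y(t)$ and $y = T_1(t)$ yields $\{f(Y(t), T_1(t)) > \alpha\} = \{\tau_\alpha(Y(t)) > T_1(t)\}$, as claimed.

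I expect the genuinely delicate point to be the pathwise inequality $T_{\ell^*(t)}(t) \leq Y(t)$. It relies on reading $Y(t)$ as a counterfactual maximum over all $M-1$ false beams, each measured with $n_0 t$ pilots ``as if'' never deactivated, and on coupling this counterfactual to the actual run so that every active beam's measurement agrees with its counterfactual value. Once this coupling is fixed, the selected index $\ell^*(t)$ points to a false beam whose measurement is one of the terms entering $Y(t)$, and the domination follows; the remaining steps are direct applications of Theorems~\ref{theorem_monoto} and~\ref{theorem_2}.
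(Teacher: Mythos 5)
Your proposal is correct and follows essentially the same route as the paper: the paper's proof introduces $Z(t)$, the maximum over \emph{active} false beams (which on $E(t)$ coincides with your $T_{\ell^*(t)}(t)$), notes $Y(t)\geq Z(t)$, and chains the implications $E(t)\Rightarrow\{f(Z(t),T_1(t))>\alpha\}\Rightarrow\{f(Y(t),T_1(t))>\alpha\}$ via the monotonicity of $f$ in its first argument. Your additional justification of the final equality through $f(x,\tau_\alpha(x))=\alpha$ and the decreasing monotonicity in $y$ is a correct elaboration of a step the paper leaves implicit.
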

\begin{proof}
	Let $Z(t)$ be the maximum of the false beams active at step $t$ and $Y(t)$  the maximum of all false beams supposing they are all measured by $t$ times. Clearly $Y(t)\geq Z(t)$. Then $E(t)$ implies the event $ \lc f(Z(t), T_1(t)) > \alpha \rc$ which implies the event $\lc f(Y(t), T_1(t)) > \alpha \rc$.
\end{proof}

The bound $q(t)$ can be evaluated as $T_1(t)\sim \chi_2^2(\eta_1(t))$ and $Y(t)$, the maximum of $(M-1)$ $\chi_2^2$-random variables, are marginally independent. Fig.~\ref{fig_MaxUnionRej} graphs $q(t)$ for $\alpha = 0.90, 0.95, 0.97, 0.99$ and $M= 16$ with the pre-beamforming SNR $= -20$dB. We note that when producing these results, quadratic approximations to $\tau_\alpha(x)$ are used. It can be seen that $q(t)$ is decreasing exponentially with the number of iterations as the non-centrality parameter $\eta_1(t)$ increases. This means that if the true beam is ever deactivated, it more likely occurs in an early stage of the iterations, i.e., when $t$ is small.

It can also be seen that $q(t)$ is very small ($<10^{-3}$) even at the first iteration for $\alpha = 0.99$, thus the overall probability that the true beam is deactivated is small. This means that the true beam will be very likely in the active set throughout the iterations. This result also holds for smaller values of $\alpha$ provided the pre-beamforming SNR is sufficiently high. To see this, we present $Q^\alpha_{\infty} \doteq \sum_{t=1}^{\infty}q(t)$ the union bound of the overall probability of deactivation for four different $\alpha$ values and two different SNRs in Table \ref{table:UnionBnd}.  As can be seen from the table, when the pre-beamforming SNR $= -10$ dB, a low $\alpha = 0.9$ still has a small overall probability of deactivation as $Q^{\alpha}_{\infty} \approx  0.073$.

\begin{figure}[t]
	\centering
	\includegraphics[width=0.6\textwidth]{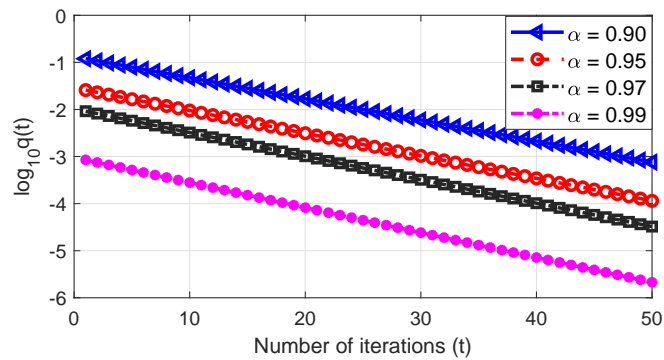}
	\caption{Upper bound on the probability that the true beam is rejected at iteration $t$: Ideal beam and single path channel assumed. Pre-beamforming SNR is $-20$ dB and $M=16$ beams.}\label{fig_MaxUnionRej}

\end{figure}

\begin{table}[t]
	\caption{Union Bound on Probability of Deactivation $Q^\alpha_{\infty}$. \label{table:UnionBnd}}
	\centering
	\small{
		\begin{tabular}{|c|c|c|c|c|}
			\hline
			$\alpha$	& 0.90 & 0.95 & 0.97 & 0.99 \\
			\hline
			$\gamma = -20$ dB & - &0.2443 & 0.0836 & 0.0073  \\
			\hline
			$\gamma = -10$ dB & 0.0734 &0.0142 & 0.0047 & 0.000397  \\
			\hline
		\end{tabular}
	}
\end{table}

To summarise our observations from evaluating $q(t)$, the true beam tends to be deactivated when $\alpha$ is small and the pre-beamforming SNR is low. If the true beam is ever deactivated, it tends to occur in an early iteration ($t$ is small).

\bibliographystyle{IEEETran}
\bibliography{IEEEabrv,Reference_mmWave}

\end{document}